\newtheorem{theorem}{Theorem}
\newtheorem{lemma}[theorem]{Lemma}
\newtheorem{proposition}[theorem]{Proposition}
\newcounter{venchoRestateLemma}
\newcounter{venchoRestateLemma1}
\newcounter{venchoRestateLemma2}
\begin{document}

\setlength{\pdfpageheight}{\paperheight}
\setlength{\pdfpagewidth}{\paperwidth}

\conferenceinfo{LICS'16}{July 5--8, 2016, New York City, USA}
\copyrightyear{2016}
\copyrightdata{978-1-nnnn-nnnn-n/yy/mm}
\copyrightdoi{nnnnnnn.nnnnnnn}


\title{On Recurrent Reachability for Continuous Linear Dynamical Systems}
\authorinfo{Ventsislav Chonev}
			{Institute of Science and Technology Austria}
			{vchonev@ist.ac.at}
\authorinfo{Jo{\"e}l Ouaknine\and James Worrell}
           {University of Oxford}
           {\{joel,jbw\}@cs.ox.ac.uk}

\maketitle

\newcommand{\bigoh}{\mathcal{O}}
\newcommand{\ein}[1]{2^{\bigoh(||#1||)}}
\newcommand{\pin}[1]{||#1||^{\bigoh(1)}}
\newcommand{\ra}{\mathbb{R}\cap\mathbb{A}}
\newcommand{\spa}{\mathit{span}}
\newcommand{\alg}{\mathbb{A}}
\newcommand{\re}{\mathbb{R}}
\newcommand{\nat}{\mathbb{N}}
\newcommand{\nats}{\mathbb{N}}
\newcommand{\rat}{\mathbb{Q}}
\newcommand{\rats}{\rat}
\newcommand{\zed}{\mathbb{Z}}
\newcommand{\lalg}{\alg}
\newcommand{\defn}{\stackrel{\mathrm{def}}{=}}
\newcommand{\thr}{\mathit{Th(\re)}}
\newcommand{\threxp}{\mathit{Th(\re_{\mathit{exp}})}}
\newcommand{\lang}{\mathcal{L}}
\newcommand{\lexp}{\mathcal{L}_{\mathit{exp}}}
\newcommand{\pspace}{\mathsf{PSPACE}}
\newcommand{\mul}{\mathit{mul}}

\renewcommand{\Im}{\mathrm{Im}}
\renewcommand{\Re}{\mathrm{Re}}

\begin{abstract}
  The continuous evolution of a wide variety of systems, including
  continous-time Markov chains and linear hybrid automata, can be
  described in terms of linear differential equations. In this paper
  we study the decision problem of whether the solution
  $\boldsymbol{x}(t)$ of a system of linear differential equations
  $d\boldsymbol{x}/dt=A\boldsymbol{x}$ reaches a target halfspace
  infinitely often. This recurrent reachability problem can
  equivalently be formulated as the following Infinite Zeros Problem:
  does a real-valued function $f:\re_{\geq 0}\rightarrow\re$
  satisfying a given linear differential equation have infinitely many
  zeros?  Our main decidability result is that if the differential
  equation has order at most $7$, then the Infinite Zeros Problem is
  decidable. On the other hand, we show that a decision procedure for
  the Infinite Zeros Problem at order $9$ (and above) would entail a
  major breakthrough in Diophantine Approximation, specifically an
  algorithm for computing the Lagrange constants of arbitrary real
  algebraic numbers to arbitrary precision.
\end{abstract}

\category{F.2.m}{Analysis of Algorithms and Problem Complexity}{Miscellaneous}

\keywords
linear dynamical systems, 
reachability, 
differential equations, Diophantine Approximation,
Skolem problem

\section{Introduction}

A simple type of continuous-time system is one that satisfies a linear
differential equation $\frac{d\boldsymbol{x}}{dt} = A\boldsymbol{x}$,
where $A$ is an $n\times n$ matrix of rational numbers and
$\boldsymbol{x}(t) \in \mathbb{R}^n$ gives the system state at time
$t$.  In particular, such differential equations describe the state
evolution of finite-state continuous-time Markov chains (via the
so-called rate equation) and the continuous evolution of linear hybrid
automata.

A fundamental reachability question in this context is whether
$\boldsymbol{x}(t)$ infinitely often reaches a target hyperplane
$\boldsymbol{v}^T\boldsymbol{x}=0$, where
$\boldsymbol{v}\in \mathbb{R}^n$ is the normal vector.  Such a
hyperplane could represent a transition guard in a hybrid automaton or
a linear constraint on the state probability distribution of a
continuous-time Markov chain (e.g., that the probability to be in a
given state is exactly one half).

A solution $\boldsymbol{x}(t)$ to the differential equation
$\frac{d\boldsymbol{x}}{dt} = A\boldsymbol{x}$ admits a
matrix-exponential representation
$\boldsymbol{x}(t)=e^{At}\boldsymbol{x}(0)$.  The problem of whether
$\boldsymbol{x}(t)$ reaches the hyperplane
$\boldsymbol{v}^T\boldsymbol{x}=0$ infinitely often then reduces to
whether $\boldsymbol{v}^T\boldsymbol{x}(t)=0$ for infinitely many
values of $t\geq 0$.  Now the function
$f:\mathbb{R}_{\geq 0}\rightarrow\mathbb{R}$ defined by
$f(t)=\boldsymbol{v}^T\boldsymbol{x}(t)$ can be written as an
exponential-polynomial 
$f(t)=\sum_{j=1}^k P_j(t)e^{\alpha_j t}$, where the $\alpha_j$ and the
coefficients of each polynomial $P_j$ are algebraic numbers.  Thus the
problem of reaching a hyperplane infinitely often reduces to the
\emph{Infinite Zeros Problem}: given an exponential polynomial $f$, decide
whether $f$ has infinitely many non-negative real zeros.  Note that
since $f$ is an analytic function on the whole real line it only has
finitely many zeros in a bounded interval.  Thus the Infinite Zeros
Problem is equivalent to asking whether the set of zeros of $f$ is
unbounded.

This paper is concerned with the decidability of the Infinite Zeros
Problem.  In order to formulate our main results, recall that
exponential polynomials can equivalently be characterised as the
solutions of ordinary differential equations 
\begin{gather} 
f^{(n)} + a_{n-1} f^{(n-1)} + \ldots + a_0 f = 0 \, ,
\label{eq:ode}
\end{gather}
with the coefficients $a_j$ and the initial conditions $f^{(j)}(0)$
being real algebraic numbers for $j\in\{0,\dots,n-1\}$.  We say that
$f$ has order $n$ if it satisfies a linear differential equation of
the form (\ref{eq:ode}).
 
Our main results concern both decision procedures and hardness results
for the Infinite Zeros Problem.  We show that the problem is decidable
for exponential polynomials of order at most $7$.  With regards to
hardness, we exhibit a reduction to show that decidability of the
Continuous Infinite Zeros Problem for instances of order at least $9$
would entail major advancements in the field of Diophantine
Approximation, namely the computability of the Lagrange constants of
arbitrary real algebraic numbers.

Let us expand on the significance of the above hardness result.
Essentially nothing is known about the Lagrange constant of any real
algebraic number of degree three or above.  For example, it has been a
longstanding open problem since the 1930s whether some real algebraic
number of degree at least three has strictly positive Lagrange
constant and, on the other hand, whether some such number has Lagrange
constant $0$ (see, e.g.,~\cite{Guy04}).  These questions are often
formulated in terms of the simple continued fraction expansion of a
real number $\alpha$, which has unbounded elements if and only if
$\alpha$ has Lagrange constant $0$.

The reader will notice that there is a gap between our decidability
and hardness results for exponential polynomials of order $8$.  We
claim decidability in this case but defer the details to a longer
version of this paper.

Another way to calibrate the difficulty of the Infinite Zeros Problem
for an exponential polynomial $f(t)=\sum_{j=1}^k p_j(t)e^{\alpha_j t}$
is in terms of the dimension of the $\mathbb{Q}$-vector space spanned
by $\{ \Im(\alpha_j) : j=1,\ldots,k\}$.  We show decidability in case
this space is one-dimensional and we observe that the above hardness
result already applies in the two-dimensional case.

\subsection{Related Work}
Closely related to the Infinite Zeros Problem is the problem of
whether an exponential polynomial has \emph{some} zero.  This problem
is considered in~\cite{BDJB10} under the name Continuous Skolem-Pisot
Problem.  The techniques considered in the present paper are relevant
to the latter problem, but significant extra difficulties arise in this
new setting since we cannot discount the behaviour of $f$ on some
bounded initial segment of the reals.  Our work on the Continuous
Skolem-Pisot Problem will be reported elsewhere.

There is a natural discrete analog of the Infinite Zeros Problem:
given a linear recurrence sequence, determine whether it has
infinitely many zero terms. The decidability of the latter problem
was established by Berstel and Mignotte~\cite{BerstelMignotte76}.  The
problem of deciding whether a given linear recurrence sequence has
some zero term is called Skolem's Problem.  This is a longstanding and
celebrated open problem which essentially asks to give an effective
proof of the Skolem-Mahler-Lech Theorem for linear recurrences; see,
e.g., the exposition of Tao~\cite[Section 3.9]{Tao08}.

Macintyre and Wilkie~\cite{macintyreWilkie} showed decidability of the
first-order theory of $\langle\re,+,\times,0,1,<,\exp\rangle$, the
real field with exponentiation, subject to Schanuel's Conjecture in
transcendence theory.  In this paper we are concerned with the complex
exponential function, and we do not use this result.  Moreover,
although we do make use of transcendence theory, all the results in this
paper are unconditional.

\section{Mathematical Background}

\subsection{General Form of a Solution}
\label{sec:forms}
We recall some facts about the general form of solutions of ordinary
linear differential equations.  
Consider a homogeneous linear differential equation 
\begin{gather} 
f^{(n)} + c_{n-1} f^{(n-1)} + \ldots + c_0 f = 0 
\label{eq:ode2}
\end{gather}
of order $n$.  
The \emph{characteristic polynomial} of (\ref{eq:ode2}) is
\[ \chi(x) := x^n + c_{n-1} x^{n-1} + \ldots + c_0 \, .\]
If $\lambda$ is a root of $\chi$ of multiplicity $m$, then the
function $f(t)=t^je^{\lambda t}$ satisfies (\ref{eq:ode2}) for
$j=0,1,\ldots,m-1$.  There are $n$ distinct linearly independent
solutions of (\ref{eq:ode2}) having this form, and these span the
space of all solutions.

Let the distinct roots of $\chi$ be $\lambda_1,\ldots,\lambda_k$, with
respective multiplicities $m_1,\ldots,m_k$. We refer to $\lambda_1,\ldots,\lambda_k$ 
as the \emph{characteristic roots} of the differential equation. We also refer to 
the characteristic roots of maximum real part as \emph{dominant}. Write
$\lambda_j=r_j+ia_j$ for real algebraic numbers $r_j,a_j$,
$j=1,\ldots,k$. It follows from the discussion above that, given real
algebraic initial values of $f(0),f'(0),\ldots,f^{(n-1)}(0)$, the
uniquely defined solution $f$ of (\ref{eq:ode2}) can be written in one
of the following three equivalent forms.
\begin{enumerate}
\item As an \emph{exponential polynomial}
\[ f(t) = \sum_{j=1}^k P_j(t)e^{\lambda_jt} \]
where each $P_j$ is a polynomial with (complex) algebraic
coefficients and degree at most $m_j - 1$.
\item As a function of the form
  \[ f(t) = \sum_{j=1}^k e^{r_jt}(P_j(t)\cos(a_jt) +
  Q_j(t)\sin(a_jt)) \]
  where the polynomials $P_j, Q_j$ have real algebraic coefficients
  and degrees at most $m_j-1$.
\item As a function of the form
\[ f(t) = \sum_{j=1}^k e^{r_jt}\sum_{l=0}^{m_l-1}b_{j,l}t^l\cos(a_jt + 
\varphi_{j,l}) \]
where $b_{j,l}$ is real algebraic and $e^{i\varphi_{j,l}}$ algebraic for each $j,l$.
\end{enumerate}
We refer the reader to~\cite[Theorem 7]{BDJB10} for details.


\subsection{Number-theoretic tools}
Throughout this paper we denote by $\alg$ the set of algebraic numbers.
Recall that a standard way to represent an algebraic number $\alpha$
is by its minimal polynomial $M$ and a numerical approximation of
sufficient accuracy to distinguish $\alpha$ from the other roots of
$M$~\cite[Section 4.2.1]{Coh93}.  Given two algebraic numbers
$\alpha$ and $\beta$ under this representation, the \emph{Field
  Membership Problem} is to determine whether
$\beta \in \mathbb{Q}(\alpha)$ and, if so, to return a polynomial $P$
with rational coefficients such that $\beta=P(\alpha)$.  This problem
can be decided using the LLL algorithm, see~\cite[Section
4.5.4]{Coh93}.

Given the characteristic polynomial $\chi$ of a linear differential
equation we can compute approximations to each of its roots
$\lambda_1,\ldots,\lambda_n$ to within an arbitrarily small additive
error~\cite{panApproximatingRoots}.  Moreover, by repeatedly using an
algorithm for the Field Membership Problem we can compute a primitive
element $\theta$ for the splitting field of $\chi$ and representations
of $\lambda_1,\ldots,\lambda_n$ as polynomials in $\theta$.  Thereby we can
determine maximal $\mathbb{Q}$-linearly independent subsets of
$\{ \Re(\lambda_j) : 1 \leq j \leq n \}$ and
$\{ \Im(\lambda_j) : 1 \leq j \leq n \}$.


We now move to some techniques from Transcendental Number Theory on
which our results depend in a critical way. The following theorem was
originally proven in 1934 by A. Gelfond \cite{gelfonda,gelfondb}
and independently by T. Schneider \cite{schneidera,schneiderb},
settling Hilbert's seventh problem in the affirmative.
\begin{theorem}\label{thm: gs}
(Gelfond-Schneider) If $a$ and $b$ are algebraic numbers with $a\neq
  0,1$ and $b\not\in \rat$, then $a^b$ is transcendental.
\end{theorem}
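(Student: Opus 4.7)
The plan is to prove Gelfond--Schneider by the classical method of auxiliary functions from transcendental number theory. Suppose for contradiction that $a\ne 0,1$ is algebraic, $b$ is algebraic irrational, and yet $\alpha := a^b$ is algebraic. Fix a branch $\ell := \log a$, which is nonzero since $a\ne 1$. The strategy is to manufacture an exponential polynomial
\[ F(z) \;=\; \sum_{m=0}^{M}\sum_{n=0}^{M} c_{mn}\, a^{mz}\alpha^{nz} \;=\; \sum_{m,n} c_{mn}\, e^{(m+nb)\ell z} \]
with rational integer coefficients $c_{mn}$, not all zero, that is forced simultaneously to be analytically very small and arithmetically too large to vanish --- a contradiction.

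The construction proceeds in three steps. First, I would apply Siegel's lemma (a pigeonhole principle for homogeneous linear systems over $\mathbb{Z}$) to select the $c_{mn}$, of controlled height, so that either $F(k)=0$ for $k=1,\ldots,K$ (Schneider's variant), or $F^{(j)}(k)=0$ for $0\le j<J$ and $1\le k\le K$ (Gelfond's variant with derivatives). The resulting $F^{(j)}(k)$ lie in the fixed number field $\mathbb{Q}(a,b,\alpha)$, and the heights and common denominators appearing in the linear system are explicitly controlled by $M,K,J$, so Siegel's lemma yields a nonzero integer solution provided the number of unknowns $(M+1)^2$ exceeds the number of equations $JK$ by an appropriate factor. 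Second, I would run an analytic extrapolation of Schwarz type: the entire function $F(z)\big/\prod_{k=1}^{K}(z-k)^{J}$ has moderate exponential growth, so the maximum modulus principle on a large disc forces $|F(z)|$ to be extraordinarily small on any bounded region disjoint from $\{1,\ldots,K\}$. Evaluating at a new integer $k^{*}>K$ gives an upper bound of the form $|F(k^{*})| \le e^{-cJK\log K}$ for some positive $c$, as $M,J,K$ grow.

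Third, comes the arithmetic half. The value $F(k^{*})$ is an algebraic number in a number field of bounded degree, with denominator and house bounded explicitly in $M,K$; hence by Liouville's inequality, if $F(k^{*})\ne 0$ its absolute value is bounded below by an explicit expression that, after careful parameter calibration, exceeds the analytic upper bound from step two. This forces $F(k^{*})=0$, and iterating pushes the vanishing set far enough to conclude $F\equiv 0$. But the exponential functions $z\mapsto e^{(m+nb)\ell z}$ for $0\le m,n\le M$ are $\mathbb{C}$-linearly independent exactly because $b\notin\mathbb{Q}$ ensures that the exponents $m+nb$ are pairwise distinct --- contradicting the nonvanishing of the $c_{mn}$. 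The main obstacle, and the genuinely delicate part of the proof, is the quantitative optimisation of the parameters $M,J,K$ so that the Schwarz extrapolation beats the Liouville lower bound by a definite margin; the irrationality of $b$ enters essentially only at the final linear-independence step, which is why the hypothesis $b\notin\mathbb{Q}$ cannot be weakened.
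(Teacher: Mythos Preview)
The paper does not prove the Gelfond--Schneider theorem at all: it is stated as background (Theorem~\ref{thm: gs}) with citations to the original 1934--35 papers of Gelfond and Schneider, and is then used as a black box in the proofs of Lemmas~\ref{lem: twoOsc} and~\ref{lem: threeOsc}. So there is no ``paper's own proof'' to compare against.

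Your sketch is a faithful outline of the classical auxiliary-function proof (closer to Schneider's presentation, or to Gelfond's if you keep the derivatives). The three pillars --- Siegel's lemma for the construction, a Schwarz-type extrapolation for the analytic upper bound, and a Liouville-type lower bound for nonzero algebraic numbers of controlled height and degree --- are all correctly identified, as is the role of $b\notin\mathbb{Q}$ in guaranteeing that the exponents $(m+nb)\ell$ are pairwise distinct so that $F\equiv 0$ is genuinely impossible. As a plan it is sound; as a proof it is of course only a sketch, since the entire content lies in the parameter bookkeeping you allude to in the final sentence. For the purposes of this paper, however, no proof is expected: the theorem is imported wholesale from the transcendence-theory literature.
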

The following lemma, proven in \cite{BDJB10}, is a useful consequence 
of the powerful Baker's Theorem~\cite[Theorem 3.1]{Baker75}:
\begin{lemma}\label{lem: twoCosBaker}
\cite[Lemma 13]{BDJB10} Let $a,b\in\ra$ 
be linearly independent over $\rats$ and let 
$\varphi_1,\varphi_2$ be logarithms of algebraic numbers,
that is, $e^{i\varphi_1},e^{i\varphi_2}\in\alg$. There exist
effective constants $C,N,T>0$ such that for all $t\geq T$, at least
one of $1-\cos(at+\varphi_1) > C/t^N$ and $1-\cos(bt+\varphi_2) > C/t^N$
holds.
\end{lemma}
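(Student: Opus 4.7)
The plan is to argue by contradiction using an effective form of Baker's Theorem on linear forms in logarithms. Suppose that at some large $t$ both $1-\cos(at+\varphi_1) \leq C/t^N$ and $1-\cos(bt+\varphi_2) \leq C/t^N$ hold. Using the elementary bound $1-\cos\theta \geq (2/\pi^2)(\theta-2\pi k)^2$, valid when $k$ is the nearest integer to $\theta/(2\pi)$, I would extract integers $k_1,k_2\in\zed$ satisfying
\[
|at+\varphi_1-2\pi k_1|\,\leq\, \pi\sqrt{C/2}\,t^{-N/2} \quad\text{and}\quad |bt+\varphi_2-2\pi k_2|\,\leq\, \pi\sqrt{C/2}\,t^{-N/2}.
\]
Forming the combination $b\cdot(\text{first})-a\cdot(\text{second})$ eliminates $t$ and yields
\[
|\Lambda| \leq (|a|+|b|)\,\pi\sqrt{C/2}\,t^{-N/2}, \qquad \text{where } \Lambda := 2\pi(ak_2-bk_1)+b\varphi_1-a\varphi_2.
\]

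Next, I would recognise $i\Lambda$ as a linear form in logarithms of algebraic numbers. Using $2\pi i=2\log(-1)$ and $i\varphi_j=\log(e^{i\varphi_j})$, one has
\[
i\Lambda = 2(ak_2-bk_1)\log(-1) + b\log(e^{i\varphi_1}) - a\log(e^{i\varphi_2}),
\]
with algebraic coefficients $2(ak_2-bk_1),\,b,\,-a$ and nonzero algebraic arguments $-1,\,e^{i\varphi_1},\,e^{i\varphi_2}$. Assuming $\Lambda\neq 0$, the effective Baker--W\"ustholz Theorem yields a lower bound of the form $|\Lambda|\geq c_1 B^{-c_2}$, where $c_1,c_2>0$ are effective constants depending on $a,b,\varphi_1,\varphi_2$ and $B$ bounds the heights of the coefficients. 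Since $|k_j|=O(t)$, one may take $B=O(t)$ and obtain $|\Lambda|\geq c_1'\,t^{-c_2}$ for effective $c_1',c_2$. Choosing any $N>2c_2$ makes this lower bound incompatible with the upper bound $O(t^{-N/2})$ above for all sufficiently large $t$, giving the contradiction.

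It remains to address the degenerate case $\Lambda=0$. This equation forces $ak_2-bk_1=(a\varphi_2-b\varphi_1)/(2\pi)$, and since $a,b$ are $\rat$-linearly independent, the map $(k_1,k_2)\mapsto ak_2-bk_1$ is injective on $\zed^2$; hence at most one integer pair makes $\Lambda$ vanish. As $t\to\infty$ the pair $(k_1,k_2)$ forced by the above inequalities grows unboundedly, so from an effective threshold $T$ onward this single exceptional pair is never encountered. I expect the main obstacle to lie in tracking the height dependence in Baker's bound sharply enough to extract an explicit $c_2$ (and thus $N$) together with an effective value for $T$; the elementary reductions are routine by comparison.
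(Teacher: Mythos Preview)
The paper does not give its own proof of this lemma; it quotes the statement from \cite{BDJB10} and remarks only that it is a consequence of Baker's Theorem on linear forms in logarithms. Your strategy---eliminate $t$ to obtain a linear form $\Lambda$ in $i\pi,\,i\varphi_1,\,i\varphi_2$ with algebraic coefficients, then invoke an effective Baker--W\"ustholz lower bound with $B=O(t)$---is precisely the intended route, and the main line of the argument is correct.

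One point is handled too quickly. In the degenerate case $\Lambda=0$ you assert that ``from an effective threshold $T$ onward this single exceptional pair is never encountered''. Knowing that at most one pair $(k_1^{*},k_2^{*})$ can make $\Lambda$ vanish does not, by itself, yield an \emph{effective} bound on $|k_1^{*}|$ (and hence on the offending values of $t$): injectivity gives uniqueness, not size. The gap is easily closed, however. From $\Lambda=0$ one gets $a(2\pi k_2-\varphi_2)=b(2\pi k_1-\varphi_1)$, so whenever one side is nonzero the ratio $(2\pi k_1-\varphi_1)/(2\pi k_2-\varphi_2)$ equals the algebraic irrational $a/b$. But each of $i(2\pi k_j-\varphi_j)$ is a logarithm of the algebraic number $e^{-i\varphi_j}$, and by Gelfond--Schneider a ratio of two such logarithms is either rational or transcendental; hence $\Lambda\neq 0$ unless both $2\pi k_j-\varphi_j$ vanish. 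In that residual case $\varphi_j\in 2\pi\zed$, the pair $(k_1^{*},k_2^{*})=(\varphi_1/2\pi,\varphi_2/2\pi)$ is explicitly known, and the first inequality forces $|at|\le \pi\sqrt{C/2}\,t^{-N/2}$, i.e.\ $t$ is effectively bounded. With this addition the effectiveness of $T$ is fully justified and your proof is complete.
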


Another necessary tool is a version of Kronecker's well-known Theorem
in Diophantine Approximation.
\begin{theorem} (Kronecker, appears in \cite{hardy}) 
Let $\lambda_1,\dots,\lambda_m$ and $x_1,\dots,x_m$ be real numbers.
Suppose that for all integers $u_1,\dots,u_m$ such that
$u_1\lambda_1+\dots+u_m\lambda_m\in\zed$, we also have
$u_1x_1 + \dots + u_mx_m\in\zed$, that is, all
integer relations among the $\lambda_j$ also hold among
the $x_j$ (modulo $\zed$). Then for all
$\epsilon > 0$, there exist $p\in\zed^m$
and $n\in\nat$ such that $|n \lambda_j - x_j - p_j | < \epsilon$ for
all $1\leq j \leq m$. In particular, if $1,\lambda_1,\dots,\lambda_m$
are linearly independent over $\zed$, then there exist
such $n\in\nat$ and $p\in\zed^m$ for all $x\in\re^m$ and $\epsilon>0$.
\end{theorem}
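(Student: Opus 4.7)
My plan is to recast the conclusion as a statement about the closure of an orbit on the torus $\mathbb{T}^m = \re^m/\zed^m$ and then apply Pontryagin duality (specifically the biduality $(H^\perp)^\perp = H$ for closed subgroups $H \subseteq \mathbb{T}^m$). Let $\pi : \re^m \to \mathbb{T}^m$ be the quotient map, write $\lambda = (\lambda_1,\ldots,\lambda_m)$ and $x = (x_1,\ldots,x_m)$, and set
\[ G = \overline{\{ \pi(n\lambda) : n \in \nat\}} \subseteq \mathbb{T}^m. \]
The desired conclusion is exactly that $\pi(x) \in G$: unfolding the quotient topology, this says that for every $\epsilon > 0$ there exist $n \in \nat$ and $p \in \zed^m$ with $|n\lambda_j - x_j - p_j| < \epsilon$ for each $j$.

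The first step is to verify that $G$ is a closed subgroup of $\mathbb{T}^m$. Closure under addition is automatic from continuity of $+$ and $(n+n')\lambda = n\lambda + n'\lambda$. For closure under negation, compactness of $\mathbb{T}^m$ produces a subsequence $n_k \to \infty$ with $\pi(n_k\lambda) \to 0$; then for any fixed $n_0 \in \nat$ one has $\pi((n_k - n_0)\lambda) \to -\pi(n_0\lambda)$, and $n_k - n_0 \in \nat$ for large $k$, so $-\pi(n_0\lambda) \in G$.

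Next I would invoke the biduality. The annihilator $G^\perp = \{ u \in \zed^m : u \cdot y \in \zed \text{ for all } y \in G \}$ consists precisely of those $u \in \zed^m$ with $u \cdot (n\lambda) \in \zed$ for all $n \in \nat$, which simplifies to $u \cdot \lambda \in \zed$. The hypothesis of the theorem is then exactly the statement that $u \cdot x \in \zed$ for every $u \in G^\perp$, i.e.\ $\pi(x) \in (G^\perp)^\perp = G$, which finishes the proof. The particular case where $1, \lambda_1, \ldots, \lambda_m$ are linearly independent over $\zed$ corresponds to $G^\perp = \{0\}$, so $G = \mathbb{T}^m$ and any $x$ can be approximated.

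The main obstacle is establishing the biduality $(H^\perp)^\perp = H$ for closed subgroups of $\mathbb{T}^m$. Rather than quoting the general Pontryagin--van Kampen theorem, I would give a direct Fourier-analytic argument: characters of the compact abelian quotient $\mathbb{T}^m / H$ separate points, since the subalgebra of trigonometric polynomials is dense in continuous functions by Stone--Weierstrass. Given $y \notin H$, pick a character of $\mathbb{T}^m/H$ not vanishing at the image of $y$, pull it back along the quotient map, and note that every character of $\mathbb{T}^m$ has the form $z \mapsto u \cdot z \pmod 1$ for some $u \in \zed^m$; this yields $u \in H^\perp$ with $u \cdot y \notin \zed$, witnessing $y \notin (H^\perp)^\perp$.
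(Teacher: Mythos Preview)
The paper does not prove Kronecker's theorem at all: it is quoted as background and attributed to Hardy and Wright, so there is no proof in the paper to compare against.

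Your outline is the standard modern route and is essentially correct, but two steps are a little too compressed. First, in showing that $G$ is closed under negation you claim compactness yields $n_k\to\infty$ with $\pi(n_k\lambda)\to 0$. Compactness only gives a convergent subsequence $\pi(m_k\lambda)\to y$; you then need the familiar difference trick (take $n_k=m_{k+1}-m_k$ along a strictly increasing $m_k$) to land at $0$ with arbitrarily large indices. Second, your Stone--Weierstrass justification of $(H^\perp)^\perp=H$ is circular as written: Stone--Weierstrass \emph{requires} point separation to conclude density, so you cannot cite density to deduce separation on the quotient. What works is the following: given $y\notin H$, use Urysohn to build $f\in C(\mathbb{T}^m)$ with $f\equiv 1$ on the coset $y+H$ and $f\equiv 0$ on $H$; approximate $f$ uniformly by a trigonometric polynomial $P=\sum_u c_u\chi_u$ on $\mathbb{T}^m$ (here Stone--Weierstrass is legitimate, since the coordinate characters visibly separate points of $\mathbb{T}^m$); then average over $H$ using Haar measure to obtain $\bar P(z)=\sum_{u\in H^\perp}c_u\chi_u(z)$, which still separates $y$ from $0$. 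This yields some $u\in H^\perp$ with $\chi_u(y)\neq 1$, as needed. Alternatively you may simply invoke Peter--Weyl (characters separate points on a compact abelian group), but then say so explicitly rather than attributing it to Stone--Weierstrass.
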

A direct consequence is the following:
\begin{lemma}\label{lem: density}
Let $a_1,\dots,a_m\in\ra$ be linearly independent over $\rats$ and let
$\varphi_1,\dots,\varphi_m\in\re$.
Write $x \bmod 2\pi$ to denote $\min_{k\in\zed} |x + 2k\pi|$ for any $x\in\re$. 
Then the image of the mapping
$h(t) : \re_{\geq 0} \rightarrow [0,2\pi)^m$ given by 
\[ h(t) = ((a_1t + \varphi_1)\bmod 2\pi, \dots, (a_mt + \varphi_m)\bmod 2\pi) \]
is dense in $[0, 2\pi)^m$. Moreover, the set 
\[ \{ h(t)\, |\, (a_1t + \varphi_1) \bmod 2\pi = 0 \} \]
is dense in $\{0\}\times[0,2\pi)^{m-1}$.
\end{lemma}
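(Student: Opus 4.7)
The plan is to derive both parts of the lemma directly from the ``in particular'' clause of Kronecker's theorem stated just above, with the requisite $\zed$-linear independence following in each case from the hypothesis on the $a_j$.

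For the first claim, fix a target $(y_1,\ldots,y_m)\in[0,2\pi)^m$ and tolerance $\epsilon>0$. I would set $\lambda_j:=a_j/(2\pi)$ and $x_j:=(y_j-\varphi_j)/(2\pi)$, and then verify that $1,\lambda_1,\ldots,\lambda_m$ are $\zed$-linearly independent. An integer relation $u_0+\sum_j u_j a_j/(2\pi)=0$ rearranges to $2\pi u_0=-\sum_j u_j a_j$; since the right-hand side is algebraic while $2\pi$ is transcendental, this forces $u_0=0$, after which $\sum_j u_j a_j=0$ gives $u_j=0$ for all $j$ by the $\rats$-linear independence of the $a_j$. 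Kronecker then produces $n\in\nats$ and $p\in\zed^m$ with $|n a_j-(y_j-\varphi_j)-2\pi p_j|<\epsilon$, and taking $t:=n\geq 0$ makes $h(t)$ lie within $\epsilon$ of the target in each coordinate, proving density.

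For the second claim, the constraint $(a_1t+\varphi_1)\bmod 2\pi=0$ parameterises $t$ as $(2\pi k-\varphi_1)/a_1$ for some $k\in\zed$; substituting into the remaining coordinates yields
\[ (a_jt+\varphi_j)\bmod 2\pi = \bigl((2\pi a_j/a_1)k + (\varphi_j - a_j\varphi_1/a_1)\bigr)\bmod 2\pi \]
for $j=2,\ldots,m$. The problem thus reduces to showing that the integer orbit of $(a_2/a_1,\ldots,a_m/a_1)$ is dense in $[0,1)^{m-1}$ modulo $1$. Kronecker again applies because $1,a_2/a_1,\ldots,a_m/a_1$ are $\zed$-linearly independent: multiplying through by $a_1$ turns any hypothetical integer relation into one among $a_1,\ldots,a_m$, which must be trivial by hypothesis.

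The one remaining concern is the sign constraint $t\geq 0$. In the first claim it is automatic since $t=n\in\nats$. In the second, if $a_1>0$ I use arbitrarily large $k\in\nats$ (obtained by iterating Kronecker at successively smaller tolerances, which standardly produces unbounded solution sets); if $a_1<0$ I instead apply Kronecker to $-a_j/a_1$ (still $\zed$-linearly independent together with $1$) and set $k:=-n$, so that $t$ is again large and positive. I expect this sign bookkeeping to be the main nuisance rather than a substantive obstacle; the mathematical content lies in the two appeals to Kronecker, underpinned by the transcendence of $2\pi$.
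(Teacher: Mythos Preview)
Your proposal is correct and follows essentially the same route as the paper: both parts are reduced to the ``in particular'' clause of Kronecker's theorem, with the needed $\zed$-linear independence of $1,a_1/(2\pi),\ldots,a_m/(2\pi)$ coming from the transcendence of~$\pi$ in the first part, and that of $1,a_2/a_1,\ldots,a_m/a_1$ coming directly from the $\rats$-independence of the $a_j$ in the second. Your treatment is in fact slightly more careful than the paper's---you correctly observe that the second reduction does not require transcendence of~$\pi$, and you explicitly address the constraint $t\geq 0$, which the paper glosses over.
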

\begin{proof}
For the first part of the claim, note that the 
linear independence
of $1,a_1/2\pi,\dots,a_m/2\pi$ follows from the linear independence of $a_1,\dots,
a_m$ and the transcendence of $\pi$. Then by Kronecker's Theorem, the restriction
$\{h(t)\, |\, t\in\nat \}$ is dense in $[0,2\pi)^m$, so certainly the whole image
of $h(t)$ must also be dense in $[0,2\pi)^m$. For the second part, the trajectory 
$h(t)$ has zero first coordinate precisely
when $t = -\varphi_1/a_1 + 2n\pi$ for some $n\in\zed$, at which times the trajectory
is
\begin{align*} g(n) & \defn 
h\left(\frac{-\varphi_1}{a_1} + 2n\pi\right) \\
& = \{0\}\times \left(n\frac{2\pi a_j}{a_1} + 
\frac{a_1\varphi_j-\varphi_1a_j}{a_1} \bmod 2\pi\right)_{2\leq j \leq m}
\end{align*}
As before, we have that $\{1, 2\pi a_2 / a_1, \dots, 2\pi a_m / a_1\}$ are linearly independent
over $\rat$ from the linear independence of $a_1,\dots,a_m$ and the transcendence of
$\pi$, so applying Kronecker's Theorem to the last $m-1$ components of this discrete
trajectory yields the second part of the claim.
\end{proof}

\subsection{First-Order Theory of the Reals}
We denote by $\lang$ the first-order language 
$\langle\re,+,\times,0,1,<\rangle$.  Atomic formulas in this language
are of the form $P(x_1,\dots,x_n) = 0$ and $P(x_1,\dots,x_n) > 0$ for
$P\in\zed[x_1,\dots,x_n]$ a polynomial with integer coefficients. A
set $X\subseteq \re^n$ is \emph{definable} in $\lang$ if there exists
some $\lang$-formula $\phi(\bar{x})$ with free variables $\bar{x}$
which holds precisely for valuations in $X$.  Analogously, a
function is definable if its graph is a definable set.  

We denote by $\thr$ \emph{the first-order theory of the reals}, that
is, the set of all valid sentences in the language $\lang$.  It is
worth remarking that any real algebraic number is readily definable
within $\lang$ using its minimal polynomial and a rational
approximation to distinguish it from the other roots. Thus, we can
treat real algebraic numbers constants as built into the language and use
them freely in the construction of formulas.  A celebrated result due
to Tarski \cite{Tarski51} is that the first-order theory of the reals
admits quantifier elimination: that each formula $\phi_1(\bar{x})$ in
$\lang$ is equivalent to some effectively computable formula
$\phi_2(\bar{x})$ which uses no quantifiers.  This immediately entails
the decidability of $\thr$. It also follows that sets definable in
$\lang$ are precisely the semialgebraic sets.  Tarski's original
result had non-elementary complexity, but improvements followed,
culminating in the detailed analysis of Renegar \cite{Renegar}.

Decidability and geometrical properties of definable sets in the
first-order theory of the structure
$\lexp=\langle\re,+,\times,0,1,<,\exp\rangle$, the reals with
exponentiation, have been explored by a number of authors.  Most
notably, Wilkie~\cite{wilkieMC} showed that the theory is
\emph{o-minimal} and Macintyre and Wilkie~\cite{macintyreWilkie}
showed that if Schanuel's conjecture is true then the theory is
decidable.  We will not need the above two results in this paper,
however we use the following, which is very straightforward to
establish directly.

\begin{proposition}
  There is a procedure that, given a semi-algebraic set
  $S\subseteq \mathbb{R}^k$ and real algebraic numbers
  $a_1,\ldots,a_k$, returns an integer $T$ such that
  $\{ t \geq 0 : (e^{a_1t},\ldots,e^{a_kt}) \in S \}$ either contains
  the interval $(T,\infty)$ or is disjoint from $(T,\infty)$.  The
  procedure also decides which of these two eventualities is the case.
\label{prop:o-minimal}
\end{proposition}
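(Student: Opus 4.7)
The plan is to apply Tarski's quantifier elimination to $S$, so that I may assume $S$ is defined by a quantifier-free Boolean combination $\phi(x_1,\ldots,x_k)$ of polynomial (in)equalities $P_i(\bar x)\sim 0$ with $P_i\in\zed[\bar x]$ and ${\sim}\in\{<,=,>\}$. It suffices, for each atomic formula, to compute an integer $T_i\ge 0$ such that the sign of $g_i(t):=P_i(e^{a_1t},\ldots,e^{a_kt})$ is constant on $(T_i,\infty)$ and to determine that sign; then $T:=\max_i T_i$ works, and $\phi$ is evaluated on the eventual signs by Boolean operations.

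Expanding $g_i$ as a sum of monomials and grouping those with equal exponent, I obtain
\[ g_i(t) \;=\; \sum_{j=1}^{N}\gamma_j\,e^{\beta_j t}, \]
with distinct $\beta_j\in\ra$ and coefficients $\gamma_j\in\ra$. The grouping is effective: compute a maximal $\rats$-linearly independent subset of $\{a_1,\ldots,a_k\}$ using the Field Membership Problem algorithm mentioned earlier, after which two integer combinations $\sum m_i a_i$ and $\sum m_i' a_i$ agree iff they agree in coordinates with respect to that basis, which is a rational linear-algebra check. If every $\gamma_j=0$ then $g_i\equiv 0$ and its sign is $0$ for all $t$, settling that atomic formula. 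Otherwise, let $\beta^*$ be the greatest $\beta_j$ for which $\gamma_j\neq 0$ and $\gamma^*$ its coefficient, and write
\[ g_i(t) = e^{\beta^* t}\bigl(\gamma^* + R(t)\bigr),\qquad |R(t)|\le\Bigl(\sum_j|\gamma_j|\Bigr)e^{-\delta t}, \]
where $\delta := \min\{\beta^*-\beta_j : \gamma_j\neq 0,\ \beta_j<\beta^*\}>0$. Then $\mathrm{sign}\,g_i(t)=\mathrm{sign}\,\gamma^*$ for every $t>\tfrac{1}{\delta}\log\bigl(\sum_j|\gamma_j|/|\gamma^*|\bigr)$, and rounding this up gives the required $T_i$.

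The only non-trivial ingredient is effectivity of $T_i$: one needs a computable lower bound on $|\gamma^*|$ and on $\delta$. Both quantities are real algebraic numbers, obtained by elementary arithmetic from the real algebraic data $a_1,\ldots,a_k$ and the integer coefficients of $P_i$, and are presented via minimal polynomial and isolating interval; so standard effective bounds on real algebraic numbers yield the desired estimates. This is the main piece of routine bookkeeping, but presents no mathematical obstacle.
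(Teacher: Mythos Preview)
Your proposal is correct and follows essentially the same approach as the paper: reduce to atomic polynomial constraints, expand $P(e^{a_1t},\ldots,e^{a_kt})$ as an exponential polynomial $\sum_j\gamma_j e^{\beta_j t}$, and observe that the eventual sign is that of the dominant coefficient, from which an effective threshold can be extracted. The paper's proof is a terse sketch that simply asserts the sign is eventually constant and the threshold computable; you have spelled out the effectivity details (grouping exponents via linear-algebra over $\rats$, bounding $|\gamma^*|$ and $\delta$ as nonzero real algebraic numbers) that the paper leaves implicit.
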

\begin{proof}
  Consider a polynomial $P \in \mathbb{Z}[u_1,\ldots,u_k]$.  For
  suitably large $t$ the sign of $P(e^{a_1t},\ldots,e^{a_kt})$ is
  identical to the sign of the coefficient of the dominant term in the
  expansion of $P(e^{a_1t},\ldots,e^{a_kt})$ as an exponential
  polynomial.  It follows that the sign of
  $P(e^{a_1t},\ldots,e^{a_kt})$ is eventually constant.  It is
  moreover clear that one can effectively compute a threshold beyond
  which the sign of $P(e^{a_1t},\ldots,e^{a_kt})$ remains the same.
  Since the set $S$ is defined by a Boolean combination of
  inequalities $P(u_1,\ldots,u_k) \sim 0$, for
  $\mathop{\sim}\in\{<,=\}$, the proposition immediately follows.
\end{proof}

\subsection{Useful Results About Exponential Polynomials}

We restate two useful theorems 
due to Bell et al.~\cite{BDJB10}. 

\begin{theorem}\label{blondelcomplex}
\cite[Theorem 12]{BDJB10}
Exponential polynomials with no real dominant 
characteristic roots have infinitely many zeros.
\end{theorem}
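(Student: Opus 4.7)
The plan is to isolate the leading behaviour of $f$ at infinity as a normalised real trigonometric sum $g(t)$, to show that $g$ assumes values bounded away from $0$ with both signs at arbitrarily large times, and to conclude by the intermediate value theorem.

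First I set up the asymptotic decomposition. Let $r$ be the maximum real part among the characteristic roots of $f$, and let $d$ be the largest polynomial degree that occurs alongside a dominant root. Since $f$ is real-valued and by hypothesis has no real dominant root, the dominant roots form complex conjugate pairs $r\pm ia_j$ with $a_j > 0$. Using form~3 of Section~\ref{sec:forms} and pairing conjugates, one writes
\[ f(t) = t^d e^{rt} g(t) + R(t), \qquad g(t) = \sum_{j\in J} b_j\cos(a_j t + \varphi_j), \]
where $J$ indexes the dominant roots attaining polynomial degree $d$, the $b_j\in\re$ are nonzero leading coefficients, and $R(t)$ collects the remaining lower-order contributions. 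A direct estimate gives $R(t)/(t^d e^{rt}) \to 0$: non-dominant exponentials are damped by a factor $e^{(r'-r)t}$ with $r'<r$, while the remaining dominant terms carry a polynomial factor $t^l$ with $l<d$.

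Next I argue that $g$ is a non-trivial, mean-zero almost periodic function with $\inf g < 0 < \sup g$, attained on arbitrarily large $t$. Since every frequency $a_j$ is nonzero, $g$ has time-average $0$, and $g$ is not identically zero because the functions $\cos(a_j t+\varphi_j)$ for distinct $a_j > 0$ are linearly independent over $\re$. To lift this to uniform statements at arbitrarily large times, pick a $\rats$-linearly independent basis $\gamma_1,\dots,\gamma_n$ of $\mathrm{span}_{\rats}\{a_j : j\in J\}$ and, after rescaling, write $a_j = \sum_k n_{jk}\gamma_k$ with $n_{jk}\in\zed$. Then $g(t) = G(\gamma_1 t,\dots,\gamma_n t)$ for a continuous $G\colon \mathbb{T}^n\to\re$. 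The function $G$ is non-constant (again by linear independence of the distinct characters it involves) and has zero integral over $\mathbb{T}^n$ (each character $\sum_k n_{jk}\theta_k$ is non-trivial because $a_j\neq 0$), so $\inf_{\mathbb{T}^n} G < 0 < \sup_{\mathbb{T}^n} G$. Kronecker's Theorem (as invoked in Lemma~\ref{lem: density}) makes the orbit $(\gamma_1 t,\dots,\gamma_n t)\bmod 2\pi$ dense in $\mathbb{T}^n$, so for every $M$ there exist $t_+, t_- > M$ with $g(t_+) > (\sup G)/2 > 0$ and $g(t_-) < (\inf G)/2 < 0$.

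To finish, set $\delta = \min(\sup G, -\inf G)/2 > 0$ and choose $T$ such that $|R(t)/(t^d e^{rt})| < \delta$ for all $t>T$. Then for arbitrarily large $t_+, t_- > T$ as above one has $f(t_+) > 0 > f(t_-)$, and the intermediate value theorem produces infinitely many zeros of $f$. The main obstacle in this plan is the passage to the torus $\mathbb{T}^n$ when the $a_j$ are $\rats$-linearly dependent, together with the verification that the induced $G$ is non-constant with zero mean; both are handled by the standard linear independence of distinct characters on $\mathbb{T}^n$ combined with Kronecker's density theorem.
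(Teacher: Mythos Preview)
The paper does not prove this theorem; it is cited without proof from \cite{BDJB10}, so there is no in-paper argument to compare against. Your proof is correct and follows the standard route: extract the dominant part $t^d e^{rt}g(t)$ with $g$ a nonzero finite sum of cosines at strictly positive frequencies, realise $g$ as a continuous zero-mean non-constant function $G$ on a torus via an integral basis of the frequency group, and use Kronecker density (Lemma~\ref{lem: density}) to find arbitrarily large $t_\pm$ at which $f$ has opposite signs.

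Two small points are worth making explicit. First, Lemma~\ref{lem: density} requires the basis frequencies $\gamma_k$ to be real algebraic and $\rats$-linearly independent; this holds because the $a_j$ are imaginary parts of algebraic characteristic roots and the $\gamma_k$ can be taken as rational rescalings of a subset of the $a_j$. Second, density of the orbit $\{(\gamma_1 t,\dots,\gamma_n t)\bmod 2\pi : t\ge 0\}$ immediately gives density of the tail $\{t\ge M\}$ for every $M$, since the latter is a torus-translate of the former; this is what guarantees the $t_\pm$ can be chosen arbitrarily large.
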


\begin{theorem}\label{blondelubcs}
\cite[Theorem 15]{BDJB10}
Suppose we are given an exponential polynomial whose 
dominant characteristic roots are simple, 
at least four in number and have
imaginary parts linearly independent over $\rats$. Then
the existence of infinitely many zeros is decidable. 
\end{theorem}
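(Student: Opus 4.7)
The plan is to normalize out the dominant exponential rate and analyze the resulting quasi-periodic expression. Let $r$ denote the common real part of the dominant characteristic roots. If none of them is real, Theorem~\ref{blondelcomplex} immediately gives infinitely many zeros. Otherwise, by simplicity there is a unique real dominant root, and the remaining at least three dominant roots form $m\ge 2$ complex conjugate pairs $r\pm ia_j$ whose imaginary parts $a_1,\dots,a_m$ are $\rats$-linearly independent. Using representation~(3), we write
\[ g(t) \defn e^{-rt}f(t) = b_0 + \sum_{j=1}^{m} b_j\cos(a_jt + \varphi_j) + \epsilon(t), \]
where $b_0,b_j\in\ra\setminus\{0\}$, $e^{i\varphi_j}\in\alg$, and $|\epsilon(t)|\le Ke^{-\delta t}$ for some effective $K,\delta>0$ (the non-dominant roots contribute terms of strictly smaller exponential rate, and polynomial factors from their multiplicities are absorbed in the decay). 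Since $f$ and $g$ share the same zero set, it remains to decide whether $g$ has infinitely many zeros.

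Introduce $G(\vec\theta)\defn b_0+\sum_j b_j\cos\theta_j$ on $[0,2\pi)^m$, so that $g(t)=G(\vec\theta(t))+\epsilon(t)$ along the trajectory $\vec\theta(t)=(a_jt+\varphi_j)_j$. The range of $G$ is the interval $[b_0-\sum_j|b_j|,\,b_0+\sum_j|b_j|]$, and by Lemma~\ref{lem: density} the trajectory is dense in $[0,2\pi)^m$. If $|b_0|<\sum_j|b_j|$, this range contains values of both strict signs; combined with $\epsilon(t)\to 0$, $g$ crosses zero infinitely often, and we output \textsc{yes}. If $|b_0|>\sum_j|b_j|$, then $G$ is bounded away from $0$ by a strictly positive constant, so $g$ is eventually of constant sign, giving only finitely many zeros, and we output \textsc{no}. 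Both conditions are decided by comparing real algebraic numbers.

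The principal obstacle is the boundary case $|b_0|=\sum_j|b_j|$. Without loss of generality $b_0=\sum_j|b_j|$, so $G\ge 0$, and the trajectory approaches the zero set of $G$ arbitrarily closely, ruling out any lower bound on $G(\vec\theta(t))$ by a constant. Shifting each $\varphi_j$ by $\pi$ when $b_j>0$ (which preserves the algebraicity of $e^{i\tilde\varphi_j}$), we rewrite
\[ G(\vec\theta(t)) = \sum_{j=1}^{m} |b_j|\bigl(1 - \cos(a_jt + \tilde\varphi_j)\bigr). \]
Every pair $(a_i,a_j)$ is $\rats$-linearly independent, so Lemma~\ref{lem: twoCosBaker}, applied for instance to $(a_1,a_2)$, yields effective constants $C,N,T>0$ with $\max_j\bigl(1-\cos(a_jt+\tilde\varphi_j)\bigr)>C/t^N$ for all $t\ge T$. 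Hence $G(\vec\theta(t))\ge(\min_j|b_j|)\cdot C/t^N$, a polynomial lower bound that eventually dominates the exponentially small $|\epsilon(t)|$. Therefore $g(t)>0$ for all sufficiently large $t$, $f$ has only finitely many zeros, and we output \textsc{no}. This final step is precisely where Baker's theorem is essential, and where the hypotheses $m\ge 2$ and $\rats$-linear independence of the imaginary parts are both used.
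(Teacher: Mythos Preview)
The paper does not prove this statement; it is quoted without proof from \cite[Theorem~15]{BDJB10}. Your argument is sound and follows the expected line: the trichotomy on $|b_0|$ versus $\sum_j|b_j|$ is the right organising principle, density (Lemma~\ref{lem: density}) handles $|b_0|<\sum_j|b_j|$, a uniform gap handles $|b_0|>\sum_j|b_j|$, and in the boundary case $|b_0|=\sum_j|b_j|$ the rewriting as $\sum_j|b_j|\bigl(1-\cos(a_jt+\tilde\varphi_j)\bigr)$ together with Lemma~\ref{lem: twoCosBaker} applied to any pair of the $\rats$-independent frequencies gives a polynomial-in-$t$ lower bound that eventually dominates the exponentially decaying remainder $\epsilon(t)$. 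The hypothesis of at least four dominant roots is precisely what guarantees $m\ge 2$ frequencies so that Lemma~\ref{lem: twoCosBaker} (which is \cite[Lemma~13]{BDJB10}) is applicable; this is indeed how the result is established in~\cite{BDJB10}.
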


\section{One Linearly Independent Oscillation}
In this section we consider exponential polynomials
$f(t) = \sum_{j=1}^k P_j(t) e^{\lambda_j t}$ under the assumption that
the span of $\{ \Im(\lambda_j) : j=1,\ldots,k\}$ is a one-dimensional
$\mathbb{Q}$-vector space.  In this case we can use fundamental
geometric properties of semi-algebraic sets to decide whether or not
$f$ has finitely many zeros. 

\begin{theorem}\label{thm: mtargument}
  Let $f(t) = \sum_{j=1}^k P_j(t) e^{\lambda_j t}$ be an 
  exponential polynomial such that the span of
  $\{ \Im(\lambda_j) : j=1,\ldots,k\}$ is a one-dimensional
  $\mathbb{Q}$-vector space.  Then the existence of infinitely many
  zeros of $f$ is decidable. 
\end{theorem}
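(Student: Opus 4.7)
Plan: I would use the one-dimensional span hypothesis to express $f$ as a polynomial in $t,e^{r_1t},\ldots,e^{r_st},\cos(at),\sin(at)$, isolate the dominant trigonometric polynomial, and decide finite-vs-infinite zeros via Tarski together with Proposition~\ref{prop:o-minimal}. First, pick a real algebraic $a>0$ and integers $n_j$ with $\Im(\lambda_j)=n_ja$. Using the second form of Section~\ref{sec:forms} and expanding each $\cos(n_jat),\sin(n_jat)$ as a polynomial in $\cos(at),\sin(at)$ via the Chebyshev identities, one obtains
\[
f(t)=R\bigl(t,e^{r_1t},\ldots,e^{r_st},\cos(at),\sin(at)\bigr)
\]
for the distinct real parts $r_1,\ldots,r_s$ of the $\lambda_j$ and an effectively computable polynomial $R$ with real algebraic coefficients. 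Let $r^*=\max_j r_j$ and let $d^*$ be the largest $t$-degree in the coefficient of $e^{r^*t}$ in $R$; collecting the top-order contributions gives
\[
f(t)=e^{r^*t}\,t^{d^*}\,h(at)+g(t),\qquad g(t)=o\bigl(e^{r^*t}t^{d^*}\bigr),
\]
where $h(x)$ is a trigonometric polynomial in $\cos x,\sin x$ with real algebraic coefficients. Since the $\lambda_j$ are distinct, the distinct-frequency trigonometric functions contributing to $h$ are linearly independent, so $h\not\equiv 0$.

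Next, test whether $h$ vanishes on the unit circle by Tarski's procedure. If $h$ is nowhere zero on $S^1$, then $|h(at)|\ge c>0$ uniformly and $|f(t)|\ge (c/2)e^{r^*t}t^{d^*}$ for all large $t$, so $f$ has only finitely many zeros and the procedure returns ``no''. Otherwise $h$ has finitely many zeros $\theta_1,\ldots,\theta_p\in[0,2\pi)$, and any zero of $f$ beyond finitely many must lie in shrinking windows around the arithmetic progressions $t_{q,m}=(\theta_q+2\pi m)/a$, since elsewhere $|h(at)|$ is uniformly bounded below.

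Finally, I would analyze each $\theta_q$ locally. Using the Taylor expansion $h(x)=c_q(x-\theta_q)^{l_q}+\cdots$ with $c_q\ne 0$, the equation $f(t)=0$ inside the window $|t-t_{q,m}|=O(t^{-1/l_q})$ becomes a sign comparison between $c_q(a(t-t_{q,m}))^{l_q}$ and the rescaled error $-g(t)/(e^{r^*t}t^{d^*})$. After substituting $(\cos(at),\sin(at))=(\cos\theta_q,\sin\theta_q)$ into the expansion of $g$ (whose deviation inside the window is of strictly lower order), this reduces to a sign question about a polynomial-exponential in $t$ and $e^{r_1t},\ldots,e^{r_st}$, which is decided for large $t$ by Proposition~\ref{prop:o-minimal} in its straightforward extension allowing a polynomial factor in $t$ (proved by the same dominant-term argument). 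Reading off these signs for each $q$ determines whether $f$ changes sign infinitely often inside the shrinking windows and hence gives the yes/no answer. The hard part will be exactly this window analysis: one must justify rigorously that the semi-algebraic comparison performed at the reference angle $\theta_q$ truly captures the sign of $f$ across the whole shrinking window, uniformly in $m$, and that the lower-order corrections to $h$ and to $g$ do not invalidate the read-off. Executing this step cleanly is the technical heart of the proof.
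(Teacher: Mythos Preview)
Your strategy of isolating a dominant trigonometric polynomial $h$ and then performing a window analysis around its zeros has a genuine gap at exactly the point you flag as ``the hard part''. The claim that, after substituting the fixed angle $(\cos\theta_q,\sin\theta_q)$ into $g$, the deviation $g(t)-g_q(t)$ is ``of strictly lower order'' inside the window is false in general: the first-order Taylor terms of the trigonometric parts of $g$ in the variable $\delta=t-t_{q,m}$ can dominate both $g_q$ and the leading term $c_q(a\delta)^{l_q}$.

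A concrete instance satisfying the one-dimensional hypothesis is
\[
f(t)=t(1-\cos t)-10\sin t+e^{-t}.
\]
Here $r^\ast=0$, $d^\ast=1$, $h(x)=1-\cos x$ vanishes at $\theta_q=0$ with $l_q=2$, and $g(t)=-10\sin t+e^{-t}$. Substituting the reference angle gives $g_q(t)=e^{-t}>0$, and your proposed comparison $c_q(a\delta)^{l_q}$ versus $-g_q(t)/t$ has no solution (the two sides have opposite fixed signs), so your procedure would output ``finitely many zeros''. But at $t=2\pi m+\delta$ with $\delta=5/(\pi m)$ one finds $f(t)\approx \pi m\delta^2-10\delta+e^{-2\pi m}=-25/(\pi m)+o(1/m)<0$, while $f(2\pi m)=e^{-2\pi m}>0$; hence $f$ has infinitely many zeros. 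The culprit is the term $-10\sin t\approx -10\delta$, which sits in $g(t)-g_q(t)$ and dominates $g_q$ throughout the window. A correct window analysis would need to track the full Taylor expansion of $g$ in $\delta$ jointly with that of $h$, leading to a recursive Newton-polygon-style argument substantially more delicate than what you sketch; your reduction to a single polynomial-exponential sign question does not survive this example.

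The paper takes a completely different and much cleaner route that bypasses all Taylor and dominant-term analysis. It parameterises $(\cos(bt),\sin(bt))$ rationally via $s=\tan(bt/2)$, so that $\{f=0\}\setminus\{\cos(bt)=-1\}$ corresponds to a semi-algebraic set $E\subseteq\re^{k+2}$ in the variables $(t,e^{a_1t},\ldots,e^{a_kt},s)$. Cell decomposition writes $E$ as a finite union of graphs $s=\xi_j(\boldsymbol u)$ (and bands, which cannot actually occur since $f$ is analytic) over semi-algebraic bases $D_j\subseteq\re^{k+1}$. Proposition~\ref{prop:o-minimal} then decides, for each $j$, whether the curve $(t,e^{a_1t},\ldots,e^{a_kt})$ eventually stays in $D_j$. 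If so for some $j$, the continuous function $\eta_j(t)=\xi_j(t,e^{a_1t},\ldots,e^{a_kt})$ must meet every branch of $\tan(bt/2)$, yielding infinitely many zeros; if not for every $j$, the zero set of $f$ is bounded. The exceptional set $\{\cos(bt)=-1\}$ is handled separately by a one-line dominant-term check.
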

\begin{proof}
Write $\lambda_j=a_j+ib_j$, where $a_j,b_j$ are real algebraic numbers
for $j=1,\ldots,k$.  By assumption there is a single real algebraic
number $b$ such that each $b_j$ is an integer multiple of $b$.
Recall that for each integer $n$, both $\cos(nbt)$ and $\sin(nbt)$ can
be written as polynomials in $\sin(bt)$ and $\cos(bt)$ with integer
coefficients.  Using this fact we can write $f$ in the form
\[ f(t) = Q(t,e^{a_1t},\ldots,e^{a_kt},\cos(bt),\sin(bt)) \, , \]
for some multivariate polynomial $Q$ with algebraic coefficients.

Now consider the semi-algebraic set
\[ E:=\left \{ (\boldsymbol{u},s) \in \mathbb{R}^{k+2} :
  \textstyle Q\left(u_0,\ldots,u_k,\frac{1-s^2}{1+s^2},
    \frac{2s}{1+s^2}\right)=0 \right \} \, . \]
Recall that
$\left\{ \left(\frac{1-s^2}{1+s^2},\frac{2s}{1+s^2}\right) : s \in
  \mathbb{R} \right\}$
comprises all points in the unit circle in $\mathbb{R}^2$ except
$(-1,0)$.  Indeed, given $\theta \in (-\pi,\pi)$, setting
$s:=\tan(\theta/2)$ we have
$\cos(\theta)=\frac{1-s^2}{1+s^2}$ and $\sin(\theta)=\frac{2s}{1+s^2}$.
It follows that $f(t)=0$ and $\cos(bt)\neq -1$ imply that
$(t,e^{a_1t},\ldots,e^{a_kt},\tan(bt/2)) \in E$.

By the Cell Decomposition Theorem for semi-algebraic
sets \cite{marker}, there are semi-algebraic sets
$C_1,\ldots,C_m \subseteq \mathbb{R}^{k+2}$,
$D_1,\ldots,D_m \subseteq \mathbb{R}^{k+1}$, and
continuous semi-algebraic functions
$\xi_j,\xi^{(1)}_{j},\xi^{(2)}_{j} : D_j \rightarrow \mathbb{R}$
such that $E$ can be
written as a disjoint union $E=C_1\cup\ldots\cup C_m$, where either
\begin{eqnarray} C_j = \{ (\boldsymbol{u},s) \in \mathbb{R}^{k+2} :\boldsymbol{u}
\in D_j \wedge s=\xi_{j}(\boldsymbol{u}) \} \label{eq:zero-dim}
\end{eqnarray}
or
\begin{eqnarray}
C_j = \{ (\boldsymbol{u},s) \in \mathbb{R}^{k+2} :\boldsymbol{u} \in
D_j \wedge \xi^{(1)}_{j}(\boldsymbol{u}) <s< \xi^{(2)}_{j}(\boldsymbol{u})\} 
\label{eq:one-dim}
\end{eqnarray} 
Moreover such a decomposition is computable from $E$.  Clearly then
\[
\{t \in \mathbb{R} : f(t) = 0 \} \subseteq 
   \bigcup_{j=1}^m \{t \in \mathbb{R} : (t,e^{a_1t},\ldots,e^{a_kt}) \in D_j \} \cup Z \, ,
\]
where $Z:=\{  t \in \mathbb{R} : \cos(bt)=-1\}$.

The restriction of the exponential polynomial $f$ to $Z$ is given by
$f(t)=Q(t,e^{a_1t},\ldots,e^{a_kt},-1,0)$.  Since this expression is a
linear combination of terms of the form $t^je^{rt}$ for real algebraic
$r$, for sufficiently large $t$ the sign of $f(t)$ is determined by
the sign of the coefficient of the dominant term.  Thus $f$ is either
identically zero on $Z$ (in which case $f$ has infinitely many zeros)
or there exists some threshold $T$ such that all zeros of $f$ in $Z$
lie in the interval $[0,T]$.

We now consider zeros of $f$ that do not lie in $Z$.  There are two
cases.  First suppose that each set
$\{ t \in \mathbb{R} : (t,e^{a_1t},\ldots,e^{a_kt}) \in D_j \}$ is
bounded for $j=1,\ldots,m$.  In this situation, using
Proposition~\ref{prop:o-minimal}, we can obtain an upper bound $T$
such that if $f(t)=0$ then $t < T$.  On the other hand, if some set
$\{ t \in \mathbb{R}_{\geq 0} : (t,e^{a_1t},\ldots,e^{a_kt}) \in D_j \}$
is unbounded then, by Proposition~\ref{prop:o-minimal}, it contains an
infinite interval $(T,\infty)$.  We claim that in this case $f$ must
have infinitely many zeros $t\geq 0$.  We first give the argument in the
case $C_j$ satisfies (\ref{eq:zero-dim}).  

Define $\eta_j(t)=\xi_{j}(t,e^{a_1t},\ldots,e^{a_kt})$ for
$t \in (T,\infty)$.  Then for $t\in (T,\infty)\setminus Z$, 
\begin{eqnarray*}
 f(t)=0 &  \Longleftarrow & (t,e^{a_1t},\ldots,e^{a_kt},\tan(bt/2)) \in C_j\\
      &  \Longleftrightarrow  & (t,e^{a_1t},\ldots,e^{a_kt}) \in D_j \wedge \eta_j(t)=\tan(bt/2).
\end{eqnarray*}
In other words, $f$ has a zero at each point
$t \in (T,\infty) \setminus Z$ at which the graph of $\eta_j$
intersects the graph of $\tan(bt/2)$.  Since $\eta_j$ is continuous
there are clearly infinitely many such intersection points, see
Figure~\ref{fig:intersection}.  

The case when $C_j$ satisifes (\ref{eq:one-dim}) is handled similarly. In fact, this case
cannot arise at all, since by the above argument, if $C_j$ satisfies (\ref{eq:one-dim}),
then $f$ has a non-trivial interval of zeros. This is impossible, since $f$ is analytic,
and hence has only isolated zeros. This completes the proof.
\end{proof}
\begin{figure}
\begin{center}
\begin{tikzpicture}
  \draw[->] (0,0) -- (8.2,0) node[right] {$t$};
  \draw[<->] (0,-1.5) -- (0,1.5) ;
  \draw (0,0.75) node[left] {$s$};
 \draw plot[domain=0:0.8] (\x,{tan(1.2 * \x r)});
  \draw[dashed] (1,-1.5) -- (1,1.5);
\draw plot[domain=-0.8:0.8] (\x+2,{tan(1.2 * \x r)});
  \draw[dashed] (3,-1.5) -- (3,1.5);
\draw plot[domain=-0.8:0.8] (\x+4,{tan(1.2 * \x r)});
  \draw[dashed] (5,-1.5) -- (5,1.5);
\draw plot[domain=-0.8:0.8] (\x+6,{tan(1.2 * \x r)});
  \draw (3.5,0) node[below] {$T$};
  \draw (3.5,0.4) sin (8,0.1);
  \draw[dashed] (7,-1.5) -- (7,1.5);
 \draw plot[domain=-0.8:0] (\x+8,{tan(1.2 * \x r)});
  \draw (3.3,0.2) node[above] {$\eta_j$};
\end{tikzpicture}
\end{center}
\caption{Intersection points of $\eta_j(t)$ and $\tan(bt/2)$.}
\label{fig:intersection}
\end{figure}
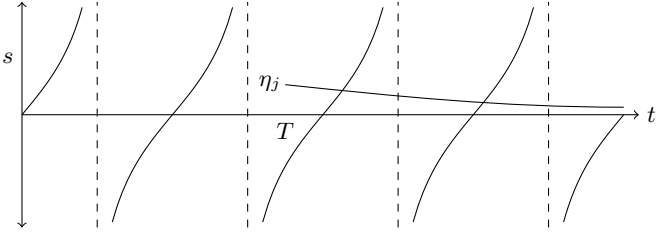

\section{Decidability Results up to Order 7}

We now shift our attention to instances of the Infinite Zeros Problem
of low order.  
In particular, for exponential
polynomials corresponding to differential equations of order at most
7, we establish decidability of the
Infinite Zeros Problem.

\begin{theorem}\label{thm: main}
  The Infinite Zeros Problem is decidable for differential
  equations of order at most $7$.
\end{theorem}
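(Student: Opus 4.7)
The plan is a structural case analysis on the dominant characteristic roots of the differential equation. Let $r = \max_j \Re(\lambda_j)$; since the zeros of $f$ coincide with those of $g(t) := e^{-rt}f(t)$, it suffices to study $g$. Write $g = g_{\text{dom}} + h$, where $g_{\text{dom}}$ collects the contributions of the dominant roots (now on the imaginary axis) and $h(t) = O(\mathrm{poly}(t)\,e^{-\varepsilon t})$ for an effectively computable $\varepsilon > 0$. Theorem~\ref{blondelcomplex} disposes of the case where $f$ has no real dominant root, so I may assume a real dominant root of multiplicity $m_r \geq 1$ together with $\ell \geq 0$ complex dominant conjugate pairs at imaginary frequencies $\pm b_1, \ldots, \pm b_\ell$ with multiplicities $n_1, \ldots, n_\ell$. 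The order bound $m_r + 2\sum n_j \leq 7$ forces $\ell \leq 3$, leaving only finitely many configurations $(m_r; n_1,\ldots,n_\ell)$ to analyse.

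Several of these are easy. When $\ell = 0$, $g_{\text{dom}}$ is a polynomial and $f$ has only finitely many zeros. When some $n_j \geq 2$, the term $t^{n_j-1}\cos(b_j t + \varphi_j)$ dominates all other contributions to $g_{\text{dom}}$ in amplitude and forces infinitely many sign changes. When $\ell \geq 2$, every pair is simple, and the frequencies $b_1, \ldots, b_\ell$ are $\mathbb{Q}$-linearly independent, Theorem~\ref{blondelubcs} applies directly. This leaves two genuinely hard families: (A) the dominant frequencies span a $1$-dimensional $\mathbb{Q}$-vector space (covering $\ell = 1$ and every dependent $\ell \in \{2,3\}$ configuration); and (B) $\ell = 3$ with three simple pairs whose frequencies span a $2$-dimensional $\mathbb{Q}$-space, a configuration which the order bound forces to satisfy $m_r = 1$.

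In family (A), I would apply the Cell Decomposition argument of Theorem~\ref{thm: mtargument} directly to $g_{\text{dom}}$, which is itself an exponential polynomial with $\mathbb{Q}$-span of imaginary parts equal to $\mathbb{Q} b$ for a single $b$. Transferring the verdict from $g_{\text{dom}}$ to $g$ requires a Baker-type lower bound (in the spirit of Lemma~\ref{lem: twoCosBaker}) showing that $|g_{\text{dom}}(t)|$ exceeds $1/\mathrm{poly}(t)$ outside vanishing neighbourhoods of its actual zeros, so that the exponentially small perturbation $h$ can neither destroy existing zeros nor introduce new ones. Family (B) would be handled by a two-frequency analogue of the same machinery: use the rational relation $b_3 = p b_1 + q b_2$ to rewrite $\cos(b_3 t + \varphi_3)$ polynomially in $\cos(b_j t), \sin(b_j t)$ for $j = 1, 2$, parametrise by $s_j = \tan(b_j t/2)$, and apply Cell Decomposition to the resulting two-parameter semi-algebraic set, with Lemma~\ref{lem: density} supplying the density of times at which the defining conditions are met. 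I expect the main obstacle to be family (B): ensuring that the two-parameter trajectory crosses the (possibly lower-dimensional) zero set of $g_{\text{dom}}$ infinitely often demands a careful quantitative Baker lower bound controlling the interaction of two $\mathbb{Q}$-independent frequencies. The order-$7$ hypothesis is critical precisely here—capping the number of interacting oscillations at three is what lets the Baker-style estimates of~\cite{BDJB10} push through, whereas a fourth independent oscillation would push the problem into the regime in which our hardness reduction starts to apply at order $\geq 9$.
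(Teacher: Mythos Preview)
Your high-level decomposition is close to the paper's, but there is a genuine gap in your treatment of Family~(A), and a secondary error in your handling of repeated complex roots.

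The secondary error first: your claim that ``when some $n_j \geq 2$, the term $t^{n_j-1}\cos(b_j t+\varphi_j)$ dominates all other contributions to $g_{\mathrm{dom}}$'' is false when $m_r \geq n_j$. With $m_r = n_j = 2$ and $\ell=1$ (order $6$ already), $g_{\mathrm{dom}}(t) = t(A\cos(at+\varphi_1)+B) + (C\cos(at+\varphi_2)+D)$, and neither the polynomial nor the oscillatory part dominates. This is exactly the configuration handled by Lemma~\ref{lem: repOsc}, and it is not trivial.

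The main gap is in Family~(A). You propose to apply Theorem~\ref{thm: mtargument} to $g_{\mathrm{dom}}$ alone and then ``transfer the verdict'' to $g = g_{\mathrm{dom}}+h$ via a Baker-type lower bound on $|g_{\mathrm{dom}}|$. This cannot work in the decisive case. After the easy subcases are stripped away, the critical configuration is $g_{\mathrm{dom}}(t) = 1-\cos(at+\varphi_1)$: a nonnegative function with \emph{tangential} zeros on an arithmetic progression. Here Theorem~\ref{thm: mtargument} tells you $g_{\mathrm{dom}}$ has infinitely many zeros, but that is uninformative---whether $f$ has infinitely many zeros is determined entirely by the sign of $h$ at (and near) those critical points. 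A lower bound of the form $|g_{\mathrm{dom}}(t)| > C/t^N$ away from its zeros does not help, because at the zeros $g_{\mathrm{dom}}$ vanishes exactly and $h$ decides everything. The paper therefore does \emph{not} treat $h$ as a perturbation: it performs a second-tier case analysis on the dominant roots of $h$ itself (Lemmas~\ref{lem: taylor}--\ref{lem: oneOscOneRep}), using Lemma~\ref{lem: twoCosBaker} to compare $1-\cos(at+\varphi_1)$ against $1-\cos(bt+\varphi_2)$ when a second frequency $b$ with $a/b\notin\rats$ appears in $h$, and a separate Taylor-expansion argument (Lemma~\ref{lem: taylor}) when $a/b\in\rats$. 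This interaction between the dominant tangential zero and the subdominant terms is precisely where the order-$7$ bound is exploited (it caps the order of $h$ at $4$), and it is the part of the proof your outline does not supply.
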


\begin{proof}
Suppose we are given an exponential polynomial $f$ of order at most $7$.
Sort the characteristic roots according to
their real parts, and let $r_j$ denote throughout the $j$-th largest
real part of a characteristic root. We will
refer to the characteristic roots of maximum real part as
the \emph{dominant characteristic roots}. 
Let also $\mul(\lambda)$ denote the multiplicity 
of $\lambda$ as a root of the characteristic polynomial of 
$f$.

We will now perform a case analysis on the number of dominant 
characteristic roots. By Theorem \ref{blondelcomplex}, it is sufficient
to confine our attention to exponential polynomials with an odd number 
of dominant characteristic roots. Throughout, we rely on known 
general forms of solutions to ordinary linear differential equations,
outlined in Section \ref{sec:forms}.

\emph{Case I.} Suppose first that there is only one dominant, 
necessarily real, root $r$. Then if we divide $f$ by $e^{rt}$,
we have:
\[ \frac{f(t)}{e^{rt}} = P_1(t) + \bigoh\left(e^{(r_2-r)t}\right), \]
as the contribution of the non-dominant roots shrinks exponentially,
relative to that of the dominant root. Thus, for large $t\geq 0$, the 
sign of $f(t)$ matches the sign of the leading coefficient 
of $P_1(t)$, so $f$ cannot have infinitely many zeros. 

\emph{Case II.} We now move to the case of three dominant characteristic roots:
$r$ and $r \pm ia$, so that
\[ \frac{f(t)}{e^{rt}} = P_1(t) + P_2(t)\cos(at) + P_3(t)\sin(at) + 
\bigoh\left(e^{(r_2-r)t}\right), \]
where $P_1,P_2,P_3\in(\ra)[x]$ have degrees $d_1\defn\deg(P_1) \leq \mul(r)-1$ 
and $d_2\defn\deg(P_2) = \deg(P_3) \leq \mul(r\pm ai)$.

\emph{Case IIa.} Suppose $d_1 > d_2$.
Now, it is easy to see that for large $t$ the sign of
$f(t)$ matches the sign of the leading coefficient $p_1$ of
$P_1$:
\[ 
\frac{f(t)}{e^{rt}t^{d_1}} = p_1 + \bigoh(1/t) + 
\bigoh\left(e^{(r_2-r)t}\right),
\]
so clearly some bound $T$ exists such that $t > T\Rightarrow f(t)\neq0$.
Similarly, if $d_2 > d_1$,
then $f(t)$ clearly has infinitely many zeros. Indeed, if $p_2, p_3$ are 
the leading coefficients of $P_2,P_3$, respectively, then we have:
\begin{align*}
\frac{f(t)}{e^{rt}t^{d_2}} & = p_2\cos(at) + p_3\sin(at) 
+ \bigoh(1/t) + \bigoh\left(e^{(r_2-r)t}\right) \\
& = \frac{\cos(at + \varphi)}{\sqrt{p_2^2+p_3^2}}
+ \bigoh(1/t) + \bigoh\left(e^{(r_2-r)t}\right) \\
\end{align*}
where $\varphi\in[0,2\pi)$ with $\tan(\varphi) = -p_3/p_2$,
so $f$ is infinitely often positive and infinitely often negative.

Thus, we can now assume $d_1 = d_2$. Notice that since the order of
our exponential polynomial is no greater than $7$, we must have 
$d_1 = d_2 \leq 2$.

\emph{Case IIb.} Suppose that $d_1 = d_2 = 2$. Then
our function is of the form
\[ 
\frac{f(t)}{e^{rt}} = t(A\cos(a t+\varphi_1) + B) + (C\cos(a t + \varphi_2) + D) + e^{(r_2-r)t}F,
\]
for constants $A,B,C,D,F,a\in\ra$ with $a>0$ and $e^{i\varphi_1},e^{i\varphi_2}\in\alg$.
In this case, Theorem \ref{thm: main} follows from Lemma \ref{lem: repOsc} in Section
\ref{sec:repOsc}.

\emph{Case IIc.} Suppose that $d_1 = d_2 = 1$, so that
\[ 
\frac{f(t)}{e^{rt}} = A_1\cos(at+\varphi_1) + A_2 + e^{(r_2-r)t}F_1(t),
\]
where $A_1,A_2,a\in\ra$, $a>0$, $e^{i\varphi_1}\in\alg$ and 
$F_1$ is an exponential polynomial with dominant characteristic root
whose real part is $0$. Consider first the magnitudes of $A_1$ and $A_2$. 
If $|A_1|>|A_2|$, then the term $A_1\cos(at+\varphi_1)$ makes $f$ change 
sign infinitely often, so $f$ must have infinitely many zeros. 
On the other hand, if $|A_1|<|A_2|$, then $f$ is clearly ultimately positive
or ultimately negative, depending on the sign of $A_2$.
The remaining case is that $|A_1| = |A_2|$. Dividing $f$ by $A_2$, replacing
$\varphi_1$ by $\varphi_1+\pi$ if needed and scaling constants by $A_2$ as necessary, 
we can assume the function has the form:
\[ \frac{f(t)}{e^{rt}} = 1-\cos(at+\varphi_1) + e^{(r_2-r)t}F_1(t). \]
We now enumerate the possibilities for the dominant characteristic roots
of the exponential polynomial $F_1$, that is,
the characteristic roots of $f$ with second-largest real part. 
Since $f$ has order at most $7$, 
there are the following cases to consider:
\begin{itemize}
\item $F_1$ has four simple, necessarily complex, dominant roots, so that
\begin{align*}
\frac{f(t)}{e^{rt}} = \; & 1-\cos(at + \varphi_1) \\
& + e^{(r_2-r)t}(B\cos(bt + \varphi_2) + C\cos(ct+\varphi_3)),
\end{align*}
where $B,C,b,c\in\ra$ with $b,c>0$ and $e^{i\varphi_2},e^{i\varphi_3}\in\alg$.
In this case, Theorem \ref{thm: main} follows from Lemma \ref{lem: oneOscTwo}
in Section \ref{sec:oneOsc}.
\item $F_1$ has some subset of one real and two complex numbers as
dominant roots, all simple, so that
\begin{align*}
\frac{f(t)}{e^{rt}} = \; & 1-\cos(at + \varphi_1) \\
& + e^{(r_2-r)t}(B\cos(bt + \varphi_2) + C)
+ e^{(r_3-r)t}F_2(t),
\end{align*}
where $B,C,b\in\ra$, $b>0$, $e^{i\varphi_2}\in\alg$ and
$F_2$ is an exponential polynomial with dominant characteristic
root whose real part is $0$. 
In this case, Theorem \ref{thm: main} follows from Lemma \ref{lem: layered}
in Section \ref{sec:oneOsc}.
\item $F_1$ has a repeated real and possibly two simple complex dominant
roots, so that
\begin{align*}
 \frac{f(t)}{e^{rt}} = \; &
 1-\cos(at + \varphi_1) \\
& + e^{(r_2-r)t}(B\cos(bt + \varphi_2) + P(t))
+ e^{(r_3-r)t}F_2(t),
\end{align*}
where $B,b\in\ra$, $b>0$, $e^{i\varphi_2}\in\alg$, and
$P\in(\ra)[x]$ is non-constant. Now, if the leading coefficient of
$P$ is negative, then $f$ will be infinitely often negative
(consider large times $t$ such that $\cos(at+\varphi_1)=1$) and
infinitely often positive (consider large times $t$ such that 
$\cos(at+\varphi_1)=0$),
so $f$ must have infinitely many zeros. On the other hand,
if the leading coefficient of $P$ is positive, then it is
easy to see that $f$ is ultimately positive. 
\item $F_1$ has a repeated pair of complex roots, so that
\begin{align*}
\frac{f(t)}{e^{rt}} = \; & 1-\cos(at + \varphi_1) \\
& + e^{(r_2-r)t}(Bt\cos(bt + \varphi_2) + C\cos(bt+\varphi_3)),
\end{align*}
where $B,C,b\in\ra$, $b>0$ and $e^{i\varphi_2}, e^{i\varphi_3}\in\alg$.
In this case, Theorem \ref{thm: main} follows from Lemma \ref{lem: oneOscOneRep}
in Section \ref{sec:oneOsc}.
\end{itemize}

\emph{Case III.} We now consider the case of five dominant characteristic roots.
Let these be $r$, $r \pm ai$ and $r \pm bi$. If $r \pm ai$ are
repeated, i.e., $\mul(r\pm ai) \geq 2$, then we must have
$\mul(r) = \mul(r\pm bi) = 1$, since otherwise the order of our
exponential polynomial exceeds $7$. Then by an argument analogous to
\emph{Case IIa} above, $f$ must have infinitely many zeros.
The situation is symmetric when $\mul(r\pm bi)\geq 2$. Similarly,
if $\mul(r)\geq 2$, then $\mul(r\pm ai) = \mul(r\pm bi) = 1$, since
otherwise the instance exceeds order $7$. Then by the same argument
as in \emph{Case IIa}, $f$ is ultimately positive or ultimately
negative. 
Thus, we may assume that all the dominant roots are simple, so the 
exponential polynomial is of the form:
\[ 
\frac{f(t)}{e^{rt}} = A\cos(at + \varphi_1) + B\cos(bt+\varphi_2) + C
+ e^{(r_2-r)t}F(t),
\]
where $A,B,C,a,b\in\ra$, $a,b>0$, $e^{i\varphi_1},e^{i\varphi_2}\in\alg$ and
$F$ is an exponential polynomial of order at most $2$ whose 
dominant characteristic roots have real part equal to $0$. 
In this case, Theorem \ref{thm: main} follows from Lemma
\ref{lem: twoOsc} in Section \ref{sec:twoOsc}.

\emph{Case IV.} Finally, suppose there are seven dominant characteristic
roots: $r$, $r\pm ai$, $r\pm bi$ and $r\pm ci$. Since we are limiting
ourselves to instances of order $7$, these roots must all be simple,
and there can be no other characteristic roots. Thus, the
exponential polynomial has the form
\[ 
\frac{f(t)}{e^{rt}} = A\cos(at + \varphi_1) + B\cos(bt+\varphi_2) + 
C\cos(ct+\varphi_3) + D,
\]
with $A,B,C,D,a,b,c\in\ra$, $a,b,c>0$ and $e^{i\varphi_1},\dots,e^{i\varphi_3}
\in\alg$. 
In this case, Theorem \ref{thm: main} follows from Lemma \ref{lem: threeOsc}
in Section~\ref{sec:threeOsc}.
\end{proof}

In the remainder of this section, we provide the technical lemmas invoked
throughout the proof of Theorem~\ref{thm: main}.
\subsection{One dominant oscillation}\label{sec:oneOsc}

\begin{lemma}\label{lem: taylor}
Let $A, B, a, b, r \in \ra$ where $a, b, r > 0$. Let
$\varphi_1,\varphi_2\in\mathbb{R}$ be such that 
$e^{i\varphi_1}, e^{i\varphi_2}\in\lalg$. Suppose
also that $a, b$ are linearly dependent over $\rats$ and that
whenever $1-\cos(at + \varphi_1) = 0$, it holds that 
$A\cos(bt + \varphi_2) + B > 0$. Define the function
\[ f(t) = 1-\cos(at + \varphi_1) + e^{-rt}(A\cos(bt + \varphi_2) + B). \]
Then $f(t) = \Omega(e^{-rt})$, that is, there exist effective constants
$T\geq 0$ and $c > 0$ such that for $t \geq T$, we have $f(t)\geq ce^{-rt}$.
\end{lemma}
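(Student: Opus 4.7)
Write $h(t) := 1 - \cos(at+\varphi_1)$ and $k(t) := A\cos(bt+\varphi_2) + B$, so that $f(t) = h(t) + e^{-rt} k(t)$. Because $a/b \in \mathbb{Q}$, the functions $h$ and $k$ share a common, effectively computable period $\tau > 0$. The strategy is to use $e^{-rt} k(t)$ to provide the required lower bound on small neighbourhoods of the zeros of $h$, while elsewhere using $h$ itself to overwhelm the exponentially decaying second term; periodicity then extends the estimate to all sufficiently large $t$.

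Within a single period $[0, \tau)$ the zero set $Z = \{t_1, \ldots, t_N\}$ of $h$ is finite, each $t_i$ being of the form $(2\pi j - \varphi_1)/a$ for some integer $j$. By hypothesis $k(t_i) > 0$ for every $i$, so $m := \min_i k(t_i) > 0$. A key point is that this minimum is effective: since $b/a$ is a known rational $p/q$, the numbers $e^{i(b t_i + \varphi_2)}$ are products of $q$-th roots of unity, the specific algebraic $q$-th root $(e^{i\varphi_1})^{-p/q}$, and $e^{i\varphi_2}$, so each $k(t_i) = A\,\Re(e^{i(bt_i+\varphi_2)}) + B$ is a real algebraic number and can be compared to $0$. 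Using $|k'(t)| \leq |Ab|$, one then picks an effective $\delta \in (0, \pi/a)$ such that $|t - t_i| \leq \delta$ forces $k(t) \geq m/2$. Monotonicity of $1-\cos$ on $[0, \pi]$ gives a uniform positive lower bound $\epsilon := 1 - \cos(a\delta)$ for $h$ on the complement within $[0, \tau)$ of the $\delta$-neighbourhoods of $Z$.

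By periodicity the same bounds transfer to every interval $[n\tau, (n+1)\tau)$. If $t$ lies within $\delta$ of some translate of a point of $Z$, then $h(t) \geq 0$ and $k(t) \geq m/2$, so $f(t) \geq (m/2)\,e^{-rt}$. Otherwise $h(t) \geq \epsilon$, whence $f(t) \geq \epsilon - (|A|+|B|)\,e^{-rt}$, and this exceeds $(m/2)\,e^{-rt}$ once
$t \geq T := \tfrac{1}{r}\log\!\bigl((m/2 + |A|+|B|)/\epsilon\bigr)$.
Setting $c := m/2$ yields $f(t) \geq c\,e^{-rt}$ for all $t \geq T$.

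The main obstacle is simply pushing effectivity through in detail: checking that $\tau$, the zeros $t_i$, the minimum $m$, the neighbourhood width $\delta$, and the lower bound $\epsilon$ are all computable from the algebraic data $(a, b, r, A, B, e^{i\varphi_1}, e^{i\varphi_2})$. This reduces to routine manipulations of algebraic numbers — identifying a specific $q$-th root of $(e^{i\varphi_1})^{-p}$, evaluating polynomials in $e^{i\varphi_1}, e^{i\varphi_2}$, and comparing real algebraic numbers to $0$ — and must be done carefully since $\varphi_1, \varphi_2$ themselves are not assumed algebraic; but everything goes through because rationality of $b/a$ collapses the orbit of $k$ at the zeros of $h$ to a finite, algebraic set.
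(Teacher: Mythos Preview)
Your proof is correct and follows essentially the same approach as the paper: both arguments identify the finitely many critical points where $1-\cos(at+\varphi_1)$ vanishes, use the rational dependence of $a,b$ to show that $A\cos(bt+\varphi_2)+B$ takes only finitely many algebraic values there with effective positive minimum $m$ (the paper's $M$), carve out $\delta$-neighbourhoods of the critical points via the derivative bound $|k'|\le |A|b$ on which $k\ge m/2$, and bound $1-\cos(at+\varphi_1)$ below by a positive constant outside those neighbourhoods. The only cosmetic difference is that you organise the argument via a common period $\tau$ and give an explicit formula for $T$, whereas the paper works directly with the sequence of critical points and uses the Taylor estimate $1-\cos(x)\ge x^2/2$ in place of your monotonicity argument.
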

\begin{proof}
The case of $A=0$ is easy: by the premise of the Lemma, we
have $B>0$ and then $f(t)\geq Be^{-rt}$ for all $t$.
Thus, assume $A\neq 0$ throughout.
Let the linear dependence between $a, b$ be given by $an_1 - bn_2 = 0$
for $n_1,n_2\in\mathbb{N}$ coprime and let
$\mathcal{C}$ be the equivalence class of $-\varphi_1/a$ 
modulo $2\pi/a$, that is,
\[
\mathcal{C} \defn \left\{ \frac{-\varphi_1 + 2k\pi}{a} \, \middle| \, k\in\mathbb{Z} \right\}.
\]
We will refer to $\mathcal{C}$ as the set of \emph{critical points}
throughout. 

It is clear that at critical points, we have $1-\cos(at +
\varphi_1)=0$. Moreover, the linear dependence of $a,b$
entails that for each fixed value of $(\cos(at),\sin(at))$,
there are only finitely many possible values for $(\cos(bt),\sin(bt))$.
Indeed, we have
\[ e^{ibt}\in\{\omega e^{iatn_1} \, | \, \omega\mbox{ an $n_2$-th root of unity} \}, \]
so in particular, for $t\in\mathcal{C}$, we have
\[
e^{ibt}\in\{\omega e^{-in_1\varphi_1}\, |\, \omega\mbox{ an $n_2$-th root of unity} \}.
\]
Thus, the possible values of $(\cos(bt), \sin(bt))$
for $t$ critical are algebraic and effectively computable.
Let $M \defn \min\{ A\cos(bt+\varphi_2)+B\,|\,t\in\mathcal{C}\}$.
By the premise of the Lemma, we have $M>0$.

Let $t_1,t_2,\dots,t_j,\dots$ be the non-negative critical points.
Note that by construction we have $|t_j - t_{j-1}| = 2\pi/a$.
For each $t_j$, define the \emph{critical region} to be the interval
$[t_j-\delta, t_j+\delta]$, where
\[ \delta\defn \frac{M}{2|A|b}. \]

Let $g(t)\defn A\cos(bt+\varphi_2)+B$ and notice that $g'(t)\leq |A|b$
everywhere. We first prove the claim for $t$ inside critical
regions: suppose $t$ lies in a critical region and let $j$ minimise 
$|t-t_j|\leq\delta$. Then by the Mean Value Theorem, we have 
\[ 
|g(t) - g(t_j)| \leq |t-t_j| |A|b \leq \delta|A|b = \frac{M}{2},
\]
so 
\[ g(t)\geq g(t_j) - \frac{M}{2} \geq \frac{M}{2}, \]
whence $f(t) \geq e^{-rt}g(t) \geq Me^{-rt}/2 = \Omega(e^{-rt})$.

Now suppose $t$ is outside all critical regions and let $j$
minimise $|t-t_j|$. Since the distance between critical 
points is $2\pi/a$ by construction, we have $a|t-t_j|\leq \pi$. 
Therefore,
\begin{align*}
1-\cos(at + \varphi_1) & = 1-\cos(at - at_j) \geq 
\frac{|a(t-t_j)|^2}{2} \\
& > \frac{(a\delta)^2}{2} = \frac{a^2M^2}{8|A|^2b^2}>0.
\end{align*}
Thus, there exists a computable constant $D>0$ such that 
$f(t)=1-\cos(at+\varphi_1) + e^{-rt}g(t)\geq D$
for all large enough $t$ outside critical regions.

Combining the two results, we have $f(t) = \Omega(e^{-rt})$
everywhere.
\end{proof}

\begin{lemma}\label{lem: layered}
Let $C, D, a, b, r_1, r_2$ be real algebraic numbers such that 
$a, b, r_1, r_2 > 0$ and $C, D$ are not both $0$.
Let also $\varphi_1,\varphi_2\in\mathbb{R}$ be
such that
$e^{i\varphi_1},e^{i\varphi_2}\in\lalg$.  
Define the exponential polynomial $f$ by
\begin{align*}
f(t) = \; & 1-\cos(at + \varphi_1) \\
& + e^{-r_1t}(C\cos(bt+\varphi_2) + D) + e^{-(r_1+r_2)t}F(t).
\end{align*}
Here $F$ is an exponential polynomial whose
dominant characteristic roots are purely imaginary. Suppose also
that $f$ has order at most $7$.
Then it is decidable whether $f$ has infinitely many
zeros. 
\end{lemma}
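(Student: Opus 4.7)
The plan is to decompose $f$ as $f(t) = g(t) + e^{-(r_1+r_2)t}F(t)$, where
\[
g(t) := 1-\cos(at+\varphi_1) + e^{-r_1 t}(C\cos(bt+\varphi_2)+D),
\]
and to exploit the fact that the tail $e^{-(r_1+r_2)t}F(t)$ is $o(e^{-r_1 t})$: indeed $r_2>0$, and $|F(t)|$ grows at most polynomially since the dominant characteristic roots of $F$ are purely imaginary, while the order-$7$ hypothesis bounds the total number of characteristic roots of $F$ by a small constant. Using the standard effective algorithms on algebraic numbers (Field Membership, primitive element, etc.), I first decide whether $a,b$ are $\rats$-linearly dependent, and then proceed by case analysis.

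If $a,b$ are $\rats$-linearly dependent, the quantity $h(t) := C\cos(bt+\varphi_2)+D$ takes only finitely many values (forming a computable set $\mathcal{V}$) as $t$ ranges over the critical points $t_j = (2j\pi-\varphi_1)/a$ of $1-\cos(at+\varphi_1)$, by exactly the calculation used in Lemma~\ref{lem: taylor}. If $\min\mathcal{V}>0$, Lemma~\ref{lem: taylor} applied to $g$ yields $g(t)=\Omega(e^{-r_1 t})$, which the tail cannot alter, so $f$ has finitely many zeros. If $\min\mathcal{V}<0$, then at critical points $t_j$ with $h(t_j)<0$ we have $f(t_j)$ dominated by $e^{-r_1 t_j}h(t_j)<0$, while midway between consecutive critical points $1-\cos(at+\varphi_1)=2$ and $f>0$; this yields infinitely many zeros. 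The borderline subcase $\min\mathcal{V}=0$ is handled by a local Taylor expansion near each critical point $t_{j^*}$ with $h(t_{j^*})=0$:
\[
f(t_{j^*}+s) \approx \tfrac{a^2 s^2}{2} + e^{-r_1 t_{j^*}}h'(t_{j^*})\, s + e^{-(r_1+r_2)t_{j^*}}F(t_{j^*}),
\]
whose sign for small $s$ is decided by a finite case analysis on $\operatorname{sign}(h'(t_{j^*}))$, $\operatorname{sign}(F(t_{j^*}))$ (both taking finitely many values by periodicity), and on the comparison of $r_1$ with $r_2$.

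If $a,b$ are $\rats$-linearly independent, Lemma~\ref{lem: density} yields density of $\{h(t_j)\}$ in $[D-|C|,\,D+|C|]$. If $D>|C|$, $h$ is uniformly bounded below by $D-|C|>0$, so $g(t)\geq (D-|C|)e^{-r_1 t}$ and $f>0$ eventually; if $D<-|C|$, $h$ is uniformly negative and the sign-alternation argument between critical points and midpoints yields infinitely many zeros; if $|D|<|C|$, density produces infinitely many critical points $t_j$ at which $h(t_j)<-\epsilon$ for some fixed $\epsilon>0$, again giving infinitely many zeros. For the boundary cases $D=\pm|C|$ I rewrite $C\cos(bt+\varphi_2)+D$ as $\pm|C|(1-\cos(bt+\psi))$ for a suitable $\psi$ with $e^{i\psi}\in\alg$, and invoke Lemma~\ref{lem: twoCosBaker} to obtain $\max(1-\cos(at+\varphi_1),\,1-\cos(bt+\psi))>C_1/t^N$ for large $t$. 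When $D=|C|$ both terms of $g$ are non-negative, so the Baker bound gives $g(t)=\Omega(e^{-r_1 t}/t^N)$, which dominates the exponentially smaller tail, and $f>0$ eventually. When $D=-|C|$ the middle term of $g$ is non-positive, and at each critical point $t_j$ of $1-\cos(at+\varphi_1)$ the Baker bound forces $g(t_j)\leq -\Omega(e^{-r_1 t_j}/t_j^N)<0$; the tail cannot compensate, so $f(t_j)<0$ while $f$ is positive midway between, and $f$ has infinitely many zeros.

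The main obstacle is the borderline dependent subcase $\min\mathcal{V}=0$, where the two non-negative contributions to $g$ vanish exactly at the same critical points and the existence or absence of zeros is determined by a delicate balance among the leading Taylor coefficients and the value of the subdominant term $F$. The order-$7$ hypothesis is essential here: it ensures that $F$ has only a small, controllable number of characteristic roots, so that the values $F(t_{j^*})$ at the borderline critical points form a finite, effectively computable algebraic family whose signs can be compared decisively against the leading Taylor coefficients in order to decide the problem.
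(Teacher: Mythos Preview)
Your treatment of the linearly independent case ($a/b\notin\rats$) is correct and essentially matches the paper: both split on the sign of $|C|-|D|$, use density (Lemma~\ref{lem: density}) when $|D|<|C|$, and invoke the Baker-type bound (Lemma~\ref{lem: twoCosBaker}) when $|C|=|D|$.

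In the dependent case, however, there is a genuine gap. Your claim that ``the values $F(t_{j^*})$ at the borderline critical points form a finite, effectively computable algebraic family\ldots by periodicity'' is false in general: nothing rules out $F$ having a characteristic frequency $c$ with $c/a\notin\rats$. Concretely, the order-$7$ bound permits $D=0$, $C\neq 0$, and $F(t)=H\cos(ct+\varphi_3)$ with $c$ rationally independent from $a$ (this is exactly the case singled out in the paper). Then along the arithmetic progression $\{t_{j^*}\}$ the values $F(t_{j^*})$ are \emph{dense} in $[-|H|,|H|]$, not finite, so your Taylor-expansion case analysis on $\operatorname{sign}(F(t_{j^*}))$ cannot be carried out as stated. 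The paper resolves the dependent case quite differently: it first invokes Theorem~\ref{thm: mtargument} to dispose of every subcase in which \emph{all} imaginary parts (including those of $F$) lie in a one-dimensional $\rats$-space, which in particular absorbs your entire ``$\min\mathcal{V}=0$ with everything periodic'' analysis in a single stroke. The only surviving subcase is precisely the one above, and there the paper uses density of $ct_{j}\bmod 2\pi$ (not periodicity): whenever some value of $C\cos(bt_j+\varphi_2)$ at a critical point is $\leq 0$, one can simultaneously arrange $H\cos(ct_j+\varphi_3)<0$ infinitely often, forcing $f(t_j)<0$; otherwise all such values are strictly positive and Lemma~\ref{lem: taylor} gives $g=\Omega(e^{-r_1 t})$.

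Even restricting to the fully periodic situation, your Taylor sketch is incomplete: you would still need to handle $h'(t_{j^*})=0$, $F(t_{j^*})=0$, and the threshold $r_1=r_2$, each spawning further subcases. Routing that entire branch through Theorem~\ref{thm: mtargument}, as the paper does, is both simpler and avoids the error above.
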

\begin{proof}
Notice that the dominant term of $f$ is always non-negative, so the
function is positive for arbitrarily large $t$. Thus, $f(t)=0$ for
some $t$ if and only if $f(t)\leq 0$ for some $t$, and analogously,
$f$ has infinitely many zeros if and only if $f(t)\leq 0$
infinitely often. We can eliminate the 
case $|D|>|C|$, since then $f$ is clearly ultimately positive 
or oscillating, depending on the sign of $D$. 
Thus, we can assume $|D|\leq|C|$.

We now consider two cases, depending on whether $a/b \in \rats$.

\emph{Case I.}
Suppose first that $a, b$ are linearly independent over
$\mathbb{Q}$. By Lemma \ref{lem: density}, the trajectory
$(at+\varphi_1 \bmod 2\pi,bt + \varphi_2\bmod 2\pi)$ 
is dense in $[0,2\pi)^2$, and moreover the restriction
of this trajectory to $at + \varphi_1 \bmod 2\pi = 0$ is
dense in $\{0\}\times [0, 2\pi)$. 

If $|D| < |C|$, then we argue that $f$ is infinitely often 
negative, and hence has infinitely many zeros.
Indeed, $|D| < |C|$ entails the existence of a non-trivial interval 
$I\subseteq [0,2\pi)$ such that 
\[ t\bmod 2\pi\in I \Rightarrow C\cos(bt+\varphi_2) + D < 0. \]
What is more, we can in fact find $\epsilon>0$ and a subinterval 
$I' \subseteq I$ such that
\[ t\bmod 2\pi\in I' \Rightarrow C\cos(bt+\varphi_2)+D < -\epsilon. \] 
Thus, by density, 
$1 - \cos(at + \varphi_1)=0$ and $C\cos(bt+\varphi_2) + D < -\epsilon$
will infinitely often hold simultaneously. Then just take $t$ large enough to 
ensure, say, $|e^{-r_2t}F(t)| < \epsilon/2$ at these infinitely
many points, and the claim follows.

Thus, suppose now $|C|=|D|$. Replacing $\varphi_2$ by $\varphi_2+\pi$ 
if necessary, we can write the function as:
\begin{align*}
f(t) = \; & 1-\cos(at + \varphi_1) \\
& + De^{-r_1t}(1-\cos(bt+\varphi_2)) + e^{-(r_1+r_2)t}F(t).
\end{align*}
As $a,b$ are linearly independent, 
for all $t$ large enough, $1-\cos(at+\varphi_1)$ and $1-\cos(bt+\varphi_2)$
cannot simultaneously be `too small'. More precisely, by Lemma \ref{lem: twoCosBaker}, 
there exist effective constants $E, T, N>0$ such that for all $t\geq T$,
we have
\begin{align*}
1-\cos(at+\varphi_1) > E / t^N \mbox{ or }  
1-\cos(bt + \varphi_2) > E / t^N. \\
\end{align*}
Now, if $D<0$, it is easy
to show that $f$ has infinitely many zeros. Indeed, consider the
times $t$ where the dominant term $1-\cos(at + \varphi_1)$
vanishes. For all large enough such $t$, since $t^{-N}$ shrinks more 
slowly than $e^{-r_2t}$, we will have 
\begin{align*}
f(t) & = e^{-r_1t}D(1-\cos(bt + \varphi_2)) + e^{-(r_1+r_2)t}F(t)  \\
& < e^{-r_1t}( EDt^{-N} + e^{-r_2t}F(t) )  \\
& \leq e^{-r_1t}\frac{1}{2}EDt^{-N} \\
& < 0, \\
\end{align*}
so $f$ has infinitely many zeros.
Similarly, if $D>0$, we can show that $f$ is
ultimately positive.  Indeed, for all $t$ large enough, 
we have
\begin{align*}
f(t) & \geq e^{-r_1t}D(1-\cos(bt+\varphi_2)) + e^{-(r_1+r_2)t}F(t) \\
& > e^{-r_1t}DE t^{-N} + e^{-(r_1+r_2)t}F(t) \\
& > 0, \\
\end{align*}
or
\begin{align*}
f(t) & \geq 1-\cos(at+\varphi_1) + e^{-(r_1+r_2)t}F(t) \\
& > E t^{-N} + e^{-(r_1+r_2)t}F(t) \\
& > 0. \\
\end{align*}
Therefore, $f$ has only finitely many zeros.

\emph{Case II.} Now suppose $a,b$ are linearly dependent. By the premise of
the Lemma, the order of $F$ is at most $2$ (in fact, 
at most $1$ if $D\neq 0$). However, by Theorem \ref{thm: mtargument}, the claim follows immediately
for all cases in which the characteristic roots of $F$ are all real
or complex but with frequencies linearly dependent on $a$. Thus, the only remaining
case to consider is the function
\begin{align*} 
f(t) = \; & 1-\cos(at+\varphi_1) \\
& + e^{-r_1t}C\cos(bt+\varphi_2) + e^{-(r_1+r_2)t}H\cos(ct+\varphi_3), 
\end{align*}
where $H,c\in\ra$, $c > 0$ and $a/c\not\in\rats$.

As explained at the beginning of the proof of Lemma \ref{lem: taylor}, due to the linear
dependence of $a, b$ over $\rats$, when $1-\cos(at+\varphi_1)=0$, there are only finitely
many possibilities for the value of $C\cos(bt+\varphi_2)$, each algebraic, effectively
computable and occurring periodically. If at least one of these values is non-positive,
then by the linear independence of $a,c$ over $\rats$, we will simultaneously have 
$1-\cos(at+\varphi_1) = 0$, $C\cos(bt+\varphi_2)\leq 0$ and $H\cos(ct+\varphi_3) < 0$
infinitely often, which yields $f(t)<0$ infinitely often and entails the existence of
infinitely many zeros. On the other hand, if at the critical points $1-\cos(at+\varphi_1) = 0$
we always have $C\cos(bt+\varphi_2) > 0$, then by Lemma \ref{lem: taylor}, we have 
\[ 1-\cos(at+\varphi_1) + e^{-r_1t}C\cos(bt+\varphi_2) = \Omega(e^{-r_1t}), \]
whereas obviously 
\[ \left|e^{-(r_1+r_2)t}H\cos(ct+\varphi_3)\right| = \bigoh(e^{-(r_1+r_2)t}). \]
If follows that $f$ is ultimately positive and hence has only finitely many zeros.
\end{proof}

\begin{lemma}\label{lem: oneOscTwo}
Let $A, B, a, b, c, r$ be real algebraic numbers such that 
$a, b, c, r > 0$,  $A, B \neq 0$.
Let also $\varphi_1,\varphi_2,\varphi_3\in\re$ be
such that
$e^{i\varphi_1},e^{i\varphi_2},e^{i\varphi_3}\in\lalg$.  
Define the exponential polynomial $f$ by
\[
f(t) = 1-\cos(ct + \varphi_3) + e^{-rt}(A\cos(at+\varphi_1) + B\cos(bt+\varphi_2)).
\]
Then it is decidable whether $f$ has infinitely many
zeros. 
\end{lemma}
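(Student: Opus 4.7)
The plan is to case-analyze according to $d := \dim_{\rats}\spa_{\rats}\{a, b, c\}$, which is effectively computable. Write $t_k := (2k\pi - \varphi_3)/c$ for the critical times where the dominant oscillation $1 - \cos(ct+\varphi_3)$ vanishes, and $g(t) := A\cos(at+\varphi_1) + B\cos(bt+\varphi_2)$, so that $f(t_k) = e^{-rt_k} g(t_k)$. Since $1 - \cos(ct+\varphi_3) \geq 0$ and attains the value $2$ at times arbitrarily far out, $f$ is positive arbitrarily far out; hence $f$ has infinitely many zeros iff $f$ is not ultimately positive.

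When $d = 1$, Theorem~\ref{thm: mtargument} decides the question, since the $\rats$-span of the imaginary parts of the characteristic roots of $f$ is one-dimensional. When $d = 3$, Lemma~\ref{lem: density} gives that $(at_k + \varphi_1, bt_k + \varphi_2) \bmod 2\pi$ is dense in $[0, 2\pi)^2$; since $G(x, y) := A\cos x + B\cos y$ attains $-(|A|+|B|) < 0$, density forces $g(t_k) < 0$ for infinitely many $k$, so $f$ has infinitely many zeros.

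The substantive case is $d = 2$. Let $(u_1, u_2, u_3) \in \zed^3$ be a primitive integer vector with $u_1 a + u_2 b + u_3 c = 0$; since $c > 0$, $(u_1, u_2) \neq (0, 0)$. By Kronecker's theorem, the critical-time trajectory is dense in the one-dimensional coset
\[ C := \{(x, y) \in (\re/2\pi\zed)^2 : u_1 x + u_2 y \equiv \Phi \pmod{2\pi}\}, \]
where $\Phi := u_1\varphi_1 + u_2\varphi_2 + u_3\varphi_3$ satisfies $e^{i\Phi} \in \alg$. Via the tangent half-angle substitution, the question ``does $G$ take a negative value on $C$?'' becomes a first-order statement over $\re$ with algebraic constants, decidable by Tarski. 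If yes, density yields $g(t_k) < 0$ infinitely often, so $f$ has infinitely many zeros.

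The main obstacle is the remaining subcase, $G \geq 0$ on $C$. The zero set of $G|_C$ is either all of $C$---detectable algebraically, in which case $g \equiv 0$ on the orbit and $f(t) = 1 - \cos(ct+\varphi_3)$ has infinitely many zeros---or a finite set of isolated points $(x^*, y^*)$ with $e^{ix^*}, e^{iy^*} \in \alg$, at which $G|_C$ vanishes quadratically. Baker's theorem for linear forms in logarithms with algebraic coefficients, applied to $x_k - x^* - 2\pi m$ after rewriting $2\pi = -2i\log(-1)$ and $\varphi_j = -i\log(e^{i\varphi_j})$, yields an effective polynomial lower bound $\mathrm{dist}((x_k, y_k), (x^*, y^*)) \geq C_0/k^N$, and hence $g(t_k) \geq C_1/k^{2N}$ for $k$ large. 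A Taylor expansion of $f$ around each $t_k$, in the spirit of Lemma~\ref{lem: taylor}---pitting the quadratic growth $1 - \cos(ct+\varphi_3) \geq (c^2/2)(t-t_k)^2 - O((t-t_k)^4)$ and Lipschitz control on $g$ against the polynomial lower bound on $g(t_k)$---then shows that $f(t) > 0$ throughout each critical neighborhood for $k$ large, since $1/k^{2N}$ beats the exponential factor $e^{-rt_k/2}$. Outside critical neighborhoods $1 - \cos(ct+\varphi_3)$ is bounded below by a positive constant and dominates the exponentially damped term, so $f$ is ultimately positive. The technical heart of the argument is thus this Baker-style lower bound together with the Taylor analysis.
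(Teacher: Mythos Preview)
Your approach is correct in outline but takes a very different and much heavier route than the paper. The paper's proof is a single paragraph with no case split on $d$: restricting to the critical times $t_k$, the sequence $e^{rt_k}f(t_k)=A\cos(at_k+\varphi_1)+B\cos(bt_k+\varphi_2)$ is a real linear recurrence in $k$ with characteristic roots $e^{(2\pi/c)(r\pm ia)}$ and $e^{(2\pi/c)(r\pm ib)}$; the paper asserts that these are all non-real, invokes the classical fact that a real recurrence with no real dominant root takes both signs infinitely often, and concludes that $f$ \emph{always} has infinitely many zeros. No decision procedure, no Kronecker, no Baker.

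What your longer argument buys is robustness in a corner case the paper's proof glosses over. The assertion ``no real dominant characteristic root'' fails precisely when $2a/c\in\zed$ or $2b/c\in\zed$, since then $e^{\pm 2\pi i a/c}\in\{\pm 1\}$. In such instances the critical-time sequence can have constant sign and $f$ can in fact be ultimately positive: e.g.\ with $a=c$, $\varphi_1=\varphi_3=0$, $A=2$, $B=1$, $b=\sqrt{2}$, $r=1$ one gets $g(t_k)=2+\cos(2\sqrt{2}k\pi)\geq 1$, and a Taylor analysis around each $t_k$ shows $f>0$ for all large $t$. Your $d=2$ machinery (Tarski to decide whether $G$ takes a negative value on $C$; Baker plus a Lemma~\ref{lem: taylor}-style estimate when $G|_C\geq 0$) is exactly what is needed to handle these cases. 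Two small points to tighten: you should explicitly include the easy sub-subcase where $G>0$ strictly on $C$ (empty zero set), and the Baker step deserves more care---write the linear form in logarithms explicitly, check it is nonzero for large $k$ (which follows since the dense orbit on $C$ cannot revisit one of the finitely many zeros of $G|_C$), and track the dependence of the height bound on $k$.
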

\begin{proof}
We argue the function is infinitely often positive and infinitely often negative
by looking at the values of $t$ for which the dominant term $1-\cos(ct+\varphi_3)$
vanishes. This happens precisely at the times $t = -(\varphi_3+2k\pi)/c$ for $k\in\zed$,
giving rise to a discrete restriction of $f$:
\begin{align*}
g(k) \defn e^{r\varphi_3}\left(e^{2\pi r}\right)^k
\left(
A\cos\left( k\frac{2\pi a}{c} - \frac{a\varphi_3}{c}+\varphi_1 \right) +\right.\\
\left.
B\cos\left( k\frac{2\pi b}{c} - \frac{b\varphi_3}{c}+\varphi_2 \right)
\right).
\end{align*}
This is a linear recurrence sequence over $\re$ of order $4$, 
with characteristic roots $e^{2\pi(r\pm ia/c)}$ and $e^{2\pi(r\pm ib/c)}$. 
In particular, it has no real dominant characteristic root. It is well-known 
that real-valued  linear recurrence sequences with no dominant real 
characteristic root are infinitely often positive and infinitely often negative:
see for example \cite[Theorem 7.1.1]{gyori}. Therefore, by continuity, 
$f$ must have infinitely many zeros.
\end{proof}

\begin{lemma}\label{lem: oneOscOneRep}
Let $A, B, a, b, r$ be real algebraic numbers such that 
$a, b, r > 0$,  $A \neq 0$.
Let also $\varphi_1,\varphi_2,\varphi_3\in\re$ be
such that
$e^{i\varphi_1},e^{i\varphi_2},e^{i\varphi_3}\in\lalg$.  
Define the exponential polynomial $f$ by
\[
f(t) = 1-\cos(at + \varphi_1)
+ e^{-rt}(At\cos(bt + \varphi_2) + B\cos(bt+\varphi_3)).
\]
Then it is decidable whether $f$ has infinitely many
zeros. 
\end{lemma}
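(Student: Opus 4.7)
The plan is to split on whether the frequencies $a$ and $b$ are linearly dependent over $\rats$. In either case, the dominant term $1-\cos(at+\varphi_1)$ is non-negative while the secondary term is bounded in modulus by $(|A|t+|B|)e^{-rt}\to 0$. Consequently, whenever $1-\cos(at+\varphi_1)\geq 1/2$, we have $f(t)>0$ for $t$ large enough, so $f$ is positive infinitely often automatically, and the infinitude of zeros reduces to deciding whether $f$ is non-positive for arbitrarily large $t$. The only candidates are the \emph{critical times} $t_k=(2k\pi-\varphi_1)/a$, at which $f(t_k)=e^{-rt_k}\bigl(At_k\cos(bt_k+\varphi_2)+B\cos(bt_k+\varphi_3)\bigr)$, together with small neighbourhoods around them.

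If $a,b$ are linearly independent over $\rats$, Lemma~\ref{lem: density} yields density of the trajectory's restriction to $\{0\}\times[0,2\pi)$ at critical times. Since $A\neq 0$, we locate arbitrarily large $t_k$ with $A\cos(bt_k+\varphi_2)\leq -|A|/2$, giving $f(t_k)\leq e^{-rt_k}\bigl(-|A|t_k/2 + |B|\bigr)<0$ once $t_k$ is large enough. Combined with $f>0$ elsewhere, $f$ has infinitely many zeros in this case.

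Suppose instead $an_1=bn_2$ with $n_1,n_2\in\nat$ coprime. Then $(\cos(bt_k+\varphi_2),\cos(bt_k+\varphi_3))$ cycles through an effectively computable finite list of algebraic pairs $(\gamma_i,\delta_i)_{1\leq i\leq n_2}$, indexed by $k\bmod n_2$. Assume WLOG $A>0$. A Taylor expansion about a critical time $t_k$ of type $i$ in the variable $s=t-t_k$ yields
\[ f(t_k+s) = \tfrac{a^2 s^2}{2} + e^{-rt_k}\bigl(At_k\gamma_i + B\delta_i - Abt_k\sigma_i s\bigr) + \varepsilon(s,t_k), \]
where $\sigma_i=\sin(bt_k+\varphi_2)$ and $\varepsilon$ is of strictly lower order on a suitably shrinking critical neighbourhood $|s|\leq s_0(t_k)$. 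A discriminant analysis of the leading quadratic in $s$ gives a trichotomy per type $i$: if $\gamma_i>0$ then $f>0$ throughout the critical neighbourhood for large $t_k$; if $\gamma_i<0$ then $f(t_k)<0$, yielding zeros by the intermediate value theorem; if $\gamma_i=0$ then $\sigma_i=\pm 1$ and the discriminant $A^2b^2t_k^2 e^{-2rt_k} - 2a^2 B\delta_i e^{-rt_k}$, after dividing by $e^{-rt_k}$, tends to $-2a^2 B\delta_i$, so $f>0$ throughout the critical neighbourhood if $B\delta_i>0$, while $f$ takes a non-positive value near $t_k$ if $B\delta_i\leq 0$. Outside the critical neighbourhoods, $1-\cos(at+\varphi_1)$ is bounded below by a quantity that dominates the secondary term, giving $f>0$. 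Aggregating, $f$ has infinitely many zeros iff some type $i$ satisfies $\gamma_i<0$ or ($\gamma_i=0$ and $B\delta_i\leq 0$); this is effectively decidable from the algebraic data.

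The main obstacle is the sub-case $\gamma_i=0$ of the trichotomy: the coefficient of $s$ in the Taylor expansion has magnitude $\Theta(t_k e^{-rt_k})$, the minimiser of the leading quadratic sits at $s^*=\Theta(t_k e^{-rt_k})$, and the error $\varepsilon$ must be shown to be of strictly smaller order than the minimum value $\Theta(e^{-rt_k})$ uniformly over the critical neighbourhood. This requires choosing $s_0(t_k)$ on the order of $\sqrt{t_k e^{-rt_k}}$, so that simultaneously the secondary error is $o(e^{-rt_k})$ inside the neighbourhood and the primary term dominates the secondary outside. The remaining cases reduce to routine asymptotic comparisons.
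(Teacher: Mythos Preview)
Your treatment of the linearly independent case ($a/b\notin\rats$) matches the paper's proof exactly: both use Lemma~\ref{lem: density} to find arbitrarily large critical times $t_k$ at which $A\cos(bt_k+\varphi_2)$ is bounded away from zero with the wrong sign, forcing $f(t_k)<0$, and conclude via the intermediate value theorem.

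For the linearly dependent case your route diverges substantially from the paper's. The paper disposes of this case in a single line by invoking Theorem~\ref{thm: mtargument}: when $a/b\in\rats$, all imaginary parts of the characteristic roots of $f$ lie in a one-dimensional $\rats$-vector space, and that theorem (proved via cell decomposition for semi-algebraic sets together with Proposition~\ref{prop:o-minimal}) handles every such exponential polynomial uniformly. You instead carry out a bespoke local analysis: computing the finite algebraic orbit $(\gamma_i,\delta_i)$ of the secondary term at critical times, Taylor-expanding $f$ near each $t_k$, and reading off the sign of $f$ from the discriminant of the leading quadratic in $s$. Your error bookkeeping is sound---with $s_0\sim\sqrt{t_k e^{-rt_k}}$ the neglected terms are indeed $o(e^{-rt_k})$ uniformly on the critical neighbourhood, while outside it $1-\cos(as)\gtrsim s_0^2$ dominates the $O(t_k e^{-rt_k})$ secondary term for a suitable constant in $s_0$---and the resulting criterion (infinitely many zeros iff some $i$ has $\gamma_i<0$, or $\gamma_i=0$ with $B\delta_i\leq 0$) is correct and effective. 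What your approach buys is an explicit, elementary decision criterion that avoids the semi-algebraic machinery of Theorem~\ref{thm: mtargument}; what the paper's approach buys is brevity and modularity, since the same theorem is reused across several other sub-cases of the order-$7$ analysis.
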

\begin{proof}
If $a/b\in\rats$, then the claim follows immediately from Theorem
\ref{thm: mtargument}. If $a/b\not\in\rats$, then by Lemma \ref{lem: density},
it will happen infinitely often that $1-\cos(at+\varphi_1) = 0$ and
$At\cos(bt+\varphi_2) < -|A|t/2$. Then clearly $f(t) < 0$ infinitely often.
Since $f(t)>0$ infinitely often as well, due to the non-negative dominant
term $1-\cos(at+\varphi_1)$, it follows that $f$ has infinitely many zeros.
\end{proof}

\subsection{Two dominant oscillations}\label{sec:twoOsc}

\begin{lemma}\label{lem: twoOsc}
Let $A, B, C, a, b, r$ be real algebraic numbers such that 
$a, b, r > 0$, $a\neq b$ and  $A, B, C \neq 0$.
Let also $\varphi_1,\varphi_2\in\re$ be
such that
$e^{i\varphi_1},e^{i\varphi_2}\in\lalg$.  
Define the exponential polynomial $f$ by
\[
f(t) = A\cos(at + \varphi_1) + B\cos(bt+\varphi_2) + C
+ e^{-rt}F(t).
\]
where $F$ is an exponential polynomial whose dominant
characteristic roots are purely imaginary.
Suppose also $f$ has order at most $8$.
It is decidable whether $f$ has infinitely many
zeros. 
\end{lemma}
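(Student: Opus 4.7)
Let $g(t) := A\cos(at+\varphi_1) + B\cos(bt+\varphi_2) + C$ denote the dominant oscillatory part of $f$. The plan is to analyse the range of $g$ and, in the boundary case where this range just touches zero, to combine the perturbation $e^{-rt}F(t)$ with a linear recurrence argument. Let $M^+ := \sup_t g(t)$ and $M^- := \inf_t g(t)$; these are effectively computable. When $a$ and $b$ are $\rat$-linearly independent, Lemma~\ref{lem: density} gives $M^\pm = C \pm (|A|+|B|)$; when $a/b \in \rat$, $g$ is periodic with computable period and its extrema are obtainable by quantifier elimination over the reals. In the generic cases --- $M^+ < 0$, $M^- > 0$, or both $M^+>0$ and $M^-<0$ strictly --- the exponentially decaying term $e^{-rt}F(t)$ cannot mask the ultimate sign of $f$ nor its infinitely many sign changes, so the first two subcases yield finitely many zeros and the third yields infinitely many.

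The interesting boundary case is $M^+ = 0$ (the case $M^- = 0$ is symmetric). After shifting phases by $\pi$ as needed we may assume $A, B > 0$ and $C = -(A+B)$, whence
$$g(t) = -A(1-\cos(at+\varphi_1)) - B(1-\cos(bt+\varphi_2)) \leq 0.$$
If $a,b$ are $\rat$-linearly independent, Lemma~\ref{lem: twoCosBaker} yields $|g(t)| \geq c/t^N$ for large $t$; since the perturbation is $O(e^{-rt}) = o(1/t^N)$, we get $f(t) \leq g(t)/2 < 0$ eventually, and $f$ has finitely many zeros.

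If instead $a,b$ are $\rat$-linearly dependent then $g$ is periodic with computable period $T_0$ and its zeros form an arithmetic progression $t_k := t_0 + kT_0$ at which both cosines equal $1$ simultaneously. A second-derivative computation gives $g(t) \approx -\tfrac{1}{2}(Aa^2+Bb^2)(t-t_k)^2$ near $t_k$, so the behaviour of $f$ in a shrinking neighbourhood of $t_k$ is governed by the sign of $f(t_k) = e^{-rt_k} F(t_k)$: a pair of zeros of $f$ if $F(t_k) > 0$, one if $F(t_k) = 0$, and none if $F(t_k) < 0$. Outside these neighbourhoods $g$ is uniformly bounded away from zero, so $f$ is eventually nonzero there. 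Thus $f$ has infinitely many zeros iff the real linear recurrence sequence $u_k := F(t_0 + kT_0)$ is not ultimately negative. Since $F$ has order at most $3$ (as $f$ has order at most $8$ and the dominant block contributes $5$) and its dominant characteristic roots are purely imaginary (so the characteristic roots of $u_k$ lie on the unit circle), this question is decidable: if all imaginary parts of $f$'s characteristic roots are $\rat$-dependent then Theorem~\ref{thm: mtargument} applies directly; otherwise some dominant frequency $c$ of $F$ satisfies $cT_0 \notin \pi\zed$, whence $u_k$ has a non-real dominant characteristic root and is infinitely often positive by the classical theorem invoked in Lemma~\ref{lem: oneOscTwo}, giving infinitely many zeros. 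Residual configurations where $F$ has a constant term alongside oscillations are handled by direct inspection of this low-order LRS.

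The principal obstacle is the rational boundary case: one must verify that, for large $t$, zeros of $f$ are confined to shrinking neighbourhoods of the critical points $t_k$ and that the zero count inside each such neighbourhood is dictated exactly by $\mathrm{sign}(F(t_k))$. This requires uniform control of $g$ away from the $t_k$'s together with careful handling of the subdominant terms of $F$, so that the reduction to $u_k$ faithfully captures ultimate negativity.
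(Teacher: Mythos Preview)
Your approach is sound in outline and takes a genuinely different route from the paper in two places. For linearly independent $a,b$ the paper simply invokes Theorem~\ref{blondelubcs}; you instead compute the extrema explicitly via Lemma~\ref{lem: density} and, in the boundary case $|C|=|A|+|B|$, use Lemma~\ref{lem: twoCosBaker} to get a polynomial lower bound on $|g(t)|$ that dominates the exponential tail. For linearly dependent $a,b$ in the boundary case, the paper does a direct case split on the dominant roots of $F$ (namely $\{0\}$, $\{\pm ic\}$, or $\{0,\pm ic\}$), whereas you reduce uniformly to whether the linear recurrence $u_k=F(t_k)$ is ultimately negative. One pleasant byproduct of your framing: in the delicate subcase $F(t)=D\cos(ct+\varphi_3)+E$ with $\sup F=0$, you need only the irrationality of $cT_0/(2\pi)$ to see that $u_k=0$ for at most one $k$, whereas the paper invokes Gelfond--Schneider to prove the stronger and unneeded statement that $f$ is never zero.

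The gap you flag, however, is real and is not closed by the quadratic-approximation sketch. The implication ``$u_k<0$ eventually $\Rightarrow$ finitely many zeros of $f$'' cannot be extracted from the local model $f(t)\approx -\tfrac{L}{2}(t-t_k)^2 + e^{-rt_k}u_k$ alone when $|u_k|$ may be arbitrarily small: that would require $|u_k|\gg e^{-rt_k}$, an inhomogeneous Diophantine lower bound you do not have. What actually saves the argument is structural: in every configuration of $F$ (order $\leq 3$, purely imaginary dominant roots) for which $u_k$ is eventually negative, one checks that in fact $F(t)\leq 0$ for all large $t$ (polynomial $F$ with negative leading coefficient, or $D\cos(ct+\varphi_3)+E$ with $E+|D|\leq 0$), whence $f=g+e^{-rt}F\leq 0$ globally and its zeros are the finitely many common zeros of $g$ and $F$. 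Verifying this is exactly the short case analysis on the dominant roots of $F$ that the paper carries out, so the two routes converge at the point of difficulty. You should also explicitly rule out $M^+=M^-=0$; the paper shows $g\equiv 0$ is impossible under $A,B\neq 0$ and $a\neq b$ by differentiating.
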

\begin{proof}
If the frequencies $a, b$ of the dominant term's oscillations are linearly
independent over $\rats$, then the claim follows immediately
by Theorem \ref{blondelubcs}. Therefore, assume $na - mb = 0$ for some
$n,m\in\mathbb{N}^{+}$. Notice that $a\neq b$ guarantees $n\neq m$. 
We perform the change of variable $t \rightarrow tm/a$, so that:
\[ 
f(t) =  A\cos(mt + \varphi_1) + B\cos(nt + \varphi_2) + C + e^{-rmt/a}F(tm/a).
\]
Using the standard trigonometric identities, we express
the dominant term as a polynomial in $\sin(t), \cos(t)$:
\[
f(t) = P(\sin(t),\cos(t)) + e^{-rmt/a}F(tm/a),
\]
where $P\in(\ra)[x,y]$ has effectively computable 
coefficients. It is clear that the dominant term is periodic. 
It is immediate from the definition of exponential polynomials
and the premise of the Lemma that $F(tm/a)\defn F_2(t)$ is an 
exponential polynomial in $t$, of the same order as $F(t)$, 
also with purely imaginary dominant characteristic roots.
Let $\alpha(t) \defn P(\sin(t),\cos(t))$, $r_2 \defn rm/a>0$
and $\beta(t) \defn e^{-rmt/a}F(tm/a) = e^{-r_2t}F_2(t)$.

We are now interested in the extrema of $P(\sin(t),\cos(t))$. Let
\begin{align*}
M_1 \defn \min_{x^2 + y^2 = 1} P(x, y) = \min_{t\geq 0} \alpha(t), \\
M_2 \defn \max_{x^2 + y^2 = 1} P(x, y) = \max_{t\geq 0} \alpha(t).
\end{align*}
We can construct defining formulas $\phi_1(u), \phi_2(u)$
in the first-order language $\lang$ of real closed fields for 
$M_1, M_2$, so that each $\phi_j(u)$ holds
precisely for the valuation $u = M_j$.
Then performing quantifier
elimination on these formulas using Renegar's algorithm \cite{Renegar},
we convert $\phi_1,\phi_2$ into the form
\[ \phi_j(u) \equiv \bigvee_l\bigwedge_k P_{l,k} (u) \sim_{l,k} 0, \]
where $P_{l,k}$ are polynomials with integer coefficients and each
$\sim_{l,k}$ is either $<$ or $=$. Now $\phi_j(u)$ must have a satisfiable
disjunct. Using the decidability of the theory $\thr$, we can
readily identify this disjunct.
Moreover, since $\phi_j(u)$ has a unique satisfying valuation,
namely $u=M_j$, this disjunct must contain at least one equality predicate.
It follows immediately that $M_1, M_2$ are algebraic. Moreover,
we can effectively compute from $\phi_j(u)$ a representation for $M_j$ 
consisting of its minimal polynomial and a sufficiently accurate 
rational approximation to distinguish $M_j$ from its Galois conjugates.
By an analogous argument, the pairs
$(\sin(t), \cos(t))$ at which $P(\sin(t), \cos(t))$ achieves the extrema
$M_1, M_2$ are also algebraic and effectively computable.

We now perform a case analysis on the signs of $M_1$ and $M_2$.
\begin{itemize}
\item First, if $0 < M_1 \leq M_2$, then $f(t)$ cannot have infinitely many zeros:
if $t$ is large enough to ensure $|\beta(t)| < M_1$,
we have $f(t)>0$. 
\item Second, if $M_1 \leq M_2 < 0$, then by the same reasoning,
the function will ultimately be strictly negative. 
\item Third, if $M_1 < 0 < M_2$,
then $f$ oscillates around $0$: for all $t$ such that 
$\alpha(t) = M_1 < 0$ and large enough
to ensure $|\beta(t)| < |M_1|$, we will have $f(t) < 0$, and 
similarly, for large enough $t$ such that $\alpha(t)=M_2>0$,
we will have $f(t)>0$, so the function must have infinitely
many zeros.
\item Next, we argue that the case $M_1=M_2=0$
is impossible. Indeed, if $M_1=M_2=0$, then 
$\alpha(t)=P(\sin(t),\cos(t))$ is identically zero,
and the same holds for all derivatives of $\alpha(t)$.
Thus, from $\alpha'(t) \equiv \alpha'''(t) \equiv 0$,
we have 
\begin{align*} 
0 & \equiv -Am\sin(mt+\varphi_1)-Bn\sin(nt+\varphi_2), \\
0 & \equiv Am^3\sin(mt+\varphi_1)+Bn^3\sin(nt+\varphi_2). \\
\end{align*}
Multiplying the first identity through by $m^2$ and summing,
we have
\[ Bn\sin(nt+\varphi_2)(n^2-m^2) \equiv 0. \]
By the premise of the Lemma, $B\neq 0$, so $n(n-m)(n+m)=0$,
which is a contradiction.
\item Finally, only the symmetric cases $M_1 < M_2 = 0$ 
and $0 = M_1 < M_2$ remain. Without loss of 
generality, by replacing $f$ by $-f$ if 
necessary, we need only consider the case
$0=M_1 < M_2$.
\end{itemize}

Thus, assume $0=M_1 < M_2$. We now move our attention to
the possible forms of $F_2$. Since $f$ has order at most
$8$, it follows that $F_2$
has order at most $3$. Thus, there are three possibilities
for the set of dominant characteristic roots of $F_2$: $\{0\}$,
$\{\pm ic\}$, or $\{0,\pm ic\}$, for some positive $c\in\ra$.
We consider each of these cases in turn.

First, if $F_2$ only has the real dominant eigenvalue $0$,
then $F_2$ is ultimately positive or ultimately negative,
depending on the sign of the most significant term of $F_2$.
Ultimate positivity of $F_2$ entails ultimate positivity
of $f$ as well, since $P(\sin(t),\cos(t))\geq 0$ everywhere,
whereas an ultimately negative $F_2$ makes $f$ change sign
infinitely often.

Second, assume the dominant characteristic roots of $F_2$ 
are $\{\pm ic\}$, so that
\[ f(t) = P(\sin(t),\cos(t)) + e^{-r_2t}\left(D\cos(ct+\varphi_3) 
+ Ee^{-r_3t}\right) \]
for some $r_3>0$ and $\varphi_3\in\re$ such that $e^{i\varphi_3}\in\alg$.
Without loss of generality, we can assume $c\not\in\rats$,
since otherwise, we are done by Theorem \ref{thm: mtargument}. 
But by Lemma \ref{lem: density},
it will happen infinitely often that $P(\sin(t),\cos(t)) = 0$
and $D\cos(ct+\varphi_3) < -|D|/2$, say. For large enough such
$t$, $|Ee^{-(r_2+r_3)t}| < |D|/4$, so we conclude that $f$
is infinitely often negative, and hence has infinitely many zeros.

Third, assume the dominant characteristic roots of $F_2(t)$ are $\{0, \pm ic \}$,
so that
\[ f(t) = P(\sin(t),\cos(t)) + e^{-r_2t}(D\cos(ct+\varphi_3) + E). \]
We again assume $c\not\in\rats$, since otherwise the claim follows
from Theorem \ref{thm: mtargument}. Let $M_3 \defn E - |D| 
= \min_{t\geq 0} F_2(t)$. If
$M_3 > 0$, then $f(t)$ clearly has no zeros. If $M_3 < 0$, then
there exists a non-trivial interval $I\subseteq [0,2\pi)$ such that
if $ct + \varphi_3 \bmod 2\pi \in I$, then $F_2(t) < 0$. Since $c\not\in
\rats$, Lemma \ref{lem: density} guarantees that $F_2(t) < 0 = P(\sin(t),\cos(t))$
happens infinitely often,
so $f$ must have infinitely many zeros. Finally, if 
$M_3 = 0$, we argue that $f$ is ultimately positive.
Indeed, since $P(\sin(t),\cos(t))$ and $F_2(t)$
are both non-negative everywhere, $f(t) = 0$ can only happen
if $P(\sin(t),\cos(t))=D\cos(ct+\varphi_3)+E=0$. This,
however, would entail $e^{it}\in\alg$ and $e^{ict}\in\alg$,
which contradicts the Gelfond-Schneider Theorem, since $c\not\in\rats$.
Thus, we conclude $f$ has no zeros.
\end{proof}

\subsection{Three dominant oscillations}\label{sec:threeOsc}

\begin{lemma}\label{lem: threeOsc}
Let $A, B, C, a, b, c$ be real algebraic numbers such that 
$a, b, c > 0$ and  $A, B, C \neq 0$.
Let also $\varphi_1,\varphi_2,\varphi_3\in\re$ be
such that
$e^{i\varphi_1},e^{i\varphi_2},e^{i\varphi_3}\in\lalg$.  
Define the exponential polynomial $f$ by
\[
f(t) = A\cos(at + \varphi_1) + B\cos(bt+\varphi_2) +
C\cos(ct+\varphi_3) + D.
\]
It is decidable whether $f$ has infinitely many
zeros.
\end{lemma}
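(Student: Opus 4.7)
The plan mirrors the strategy of Lemma~\ref{lem: twoOsc}: reduce $f$ to a
trigonometric polynomial on a torus, compute its extrema via quantifier
elimination, and read off the answer from the sign pattern. Set
$d := \dim_\rat \spa\{a,b,c\} \in \{1,2,3\}$. If $d=1$ then the imaginary
parts of all characteristic roots of $f$ span a one-dimensional
$\rat$-vector space and Theorem~\ref{thm: mtargument} applies directly,
so assume henceforth $d \in \{2,3\}$.

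First I would compute a $\rat$-basis $a_1',\dots,a_d'$ of $\spa\{a,b,c\}$
and, after replacing $t$ by $t/N$ for a suitable positive integer $N$
(which does not affect whether $f$ has infinitely many zeros), integer
vectors $\ell^{(1)},\ell^{(2)},\ell^{(3)} \in \zed^d$ satisfying
$a = \ell^{(1)} \cdot (a_1',\dots,a_d')$ and likewise for $b$ and $c$.
Then $f(t) = P(a_1' t, \dots, a_d' t)$, where
\[
  P(\theta) = A\cos(\ell^{(1)}\!\cdot\theta+\varphi_1)
            + B\cos(\ell^{(2)}\!\cdot\theta+\varphi_2)
            + C\cos(\ell^{(3)}\!\cdot\theta+\varphi_3) + D
\]
is a trigonometric polynomial on $[0,2\pi)^d$. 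By Lemma~\ref{lem: density}
the orbit $t \mapsto (a_1' t, \dots, a_d' t) \bmod 2\pi$ is dense in
$[0,2\pi)^d$.

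Next I would compute the real algebraic numbers
$M_1 := \min_\theta P(\theta)$ and $M_2 := \max_\theta P(\theta)$ by
applying Renegar's quantifier elimination to their defining
$\lang$-formulas, in direct analogy with the extremum computation in the
proof of Lemma~\ref{lem: twoOsc}. In the generic sub-cases the signs of
$M_1, M_2$ decide the question: if $M_1 > 0$ or $M_2 < 0$ then $f$ is
bounded away from zero for large $t$ and has only finitely many zeros; if
$M_1 < 0 < M_2$ then density of the orbit together with continuity of
$f$ yield infinitely many zeros. The degenerate case $M_1 = M_2 = 0$
cannot occur, since $P$ has nonzero coefficients $A,B,C$ on three pairwise
distinct (and pairwise non-antipodal) nonzero Fourier modes, so
$P \not\equiv 0$ and $M_1 < M_2$.

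The main obstacle is the boundary case in which exactly one extremum
vanishes, say $M_1 = 0 < M_2$. Here $f \geq 0$ everywhere and $f(t) = 0$
iff the orbit meets the min locus $Z := P^{-1}(0) \subseteq [0,2\pi)^d$.
A semi-algebraic description of $Z$ and of its dimension is computable via
quantifier elimination. When $\dim Z = 0$, $Z$ consists of finitely many
points $\theta^*$ with algebraic coordinates
$(\cos\theta_i^*, \sin\theta_i^*)$; a Gelfond--Schneider argument in the
spirit of the closing paragraph of Lemma~\ref{lem: twoOsc} then shows the
orbit cannot hit such a point at any $t^* > 0$, since otherwise
$e^{ia_i' t^*}$ and $e^{ia_j' t^*}$ would both be algebraic for some
$i \neq j$ with $a_i'/a_j' \notin \rat$, contradicting
Theorem~\ref{thm: gs} (the sub-case $e^{ia_i' t^*} = 1$ is handled by
applying the same argument to a different pair of basis frequencies).
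When $\dim Z > 0$, the restricted Fourier support of $P$ (only three
oscillatory modes) makes this a very special configuration; a finer case
analysis forces algebraic relations among
$A, B, C, D, \varphi_j, \ell^{(j)}$ that in turn reduce $f$ to an
already-decidable instance of Theorem~\ref{thm: mtargument} or
Lemma~\ref{lem: twoOsc}. This positive-dimensional sub-case is the
principal technical hurdle.
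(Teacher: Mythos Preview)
Your overall architecture---reduce to a trigonometric polynomial on a
$d$-torus, compute the extrema $M_1,M_2$ by quantifier elimination, and
in the boundary case $M_1=0$ invoke Gelfond--Schneider on the finitely
many minimisers---is exactly the paper's strategy for the case $d=2$
(its Case~III). For $d=3$ the paper instead invokes
Theorem~\ref{blondelubcs} as a black box, though your extremum approach
also goes through there, since with three independent basis frequencies
the minimum of $P$ is $D-|A|-|B|-|C|$, attained at a single point.

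The genuine gap is your treatment of the sub-case $\dim Z>0$ when
$M_1=0$. You call this ``the principal technical hurdle'' and propose
that a finer case analysis should force relations that reduce $f$ to
Theorem~\ref{thm: mtargument} or Lemma~\ref{lem: twoOsc}. That programme
is not carried out, and it is not clear it would succeed: if $Z$ were a
genuine algebraic curve in the $2$-torus, deciding whether the dense
one-parameter orbit ever meets it is not obviously reducible to the
earlier lemmas. The paper sidesteps the whole issue by showing that this
sub-case is \emph{vacuous}: it works in the coordinates
$(x_1,x_2,x_3)=(at,bt,ct)$ on the constrained torus
$mx_1+nx_2+px_3\in 2\pi\zed$ and invokes the Zero-Dimensionality Lemma
of~\cite[Lemma~10]{ouaknine2014positivity}, which asserts that a function
of the form $\sum_j A_j\cos(x_j+\varphi_j)+D$ attains its extremum on each
affine slice $\mathbb{T}_k$ at only finitely many points. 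Hence $Z$ is
always finite, and your $\dim Z=0$ argument (which is correct) already
covers everything. So the fix is not a ``finer case analysis'' but rather
the observation---backed by that cited lemma---that the positive-dimensional
case never occurs.

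One minor point: in your Gelfond--Schneider step for $d=2$ there is only
one pair of basis frequencies, so ``applying the same argument to a
different pair'' is not available. The correct patch is that
$e^{ia_1't^*}=e^{ia_2't^*}=1$ would force $a_1'/a_2'\in\rat$, so at most
one of them equals $1$; take the other as the base in
Theorem~\ref{thm: gs}. The paper avoids this by keeping the original
frequencies $a,b,c$ and choosing in advance two of them (say $a$ and $c$)
with irrational ratio.
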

\begin{proof}
The argument consists of three cases, depending on the linear dependencies
over $\rats$ satisfied by $a,b$ and $c$.

\emph{Case I.} First, if $a,b,c$ are linearly independent over $\rats$,
then the claim follows directly from Theorem \ref{blondelubcs}.

\emph{Case II.} Second, suppose that $a,b,c$ are all rational multiples
of one another:
\[
b = \frac{n}{m}a, \, c = \frac{k}{l}a\, \mbox{ where $n,m,k,l\in\nat^{+}$}.
\]
We make the change of variable $t\rightarrow tml$ to obtain:
\begin{align*}
f(t)  = \; & A\cos((at)ml + \varphi_1) + B\cos((at)nl+\varphi_2) \\
& + C\cos((at)km+\varphi_3) + D \\
 = \; & P(\sin(at),\cos(at)),
\end{align*}
where $P\in\alg[x,y]$ is a polynomial obtained using the standard
trigonometric identities. It is now clear that $f$ is periodic,
so it has either no zeros or infinitely many zeros. Let 
\begin{align*}
M_1 \defn \min_{x^2 + y^2 = 1} P(x, y) = \min_{t\geq 0} f(t), \\
M_2 \defn \max_{x^2 + y^2 = 1} P(x, y) = \max_{t\geq 0} f(t).
\end{align*}
Using the same reasoning as in Lemma \ref{lem: twoOsc},
we see that $M_1, M_2$ are algebraic and effectively computable:
simply construct defining formulas in the first-order language
$\lang$ of real closed fields, and then perform quantifier elimination
using Renegar's algorithm \cite{Renegar}. Then $f$ clearly
has infinitely many zeros if and only if $M_1\leq 0\leq M_2$.

\emph{Case III.} Finally, suppose that $a,b,c$ span a $\rats$-vector
space of dimension $2$, so that $a,b,c$ satisfy a single linear dependence
$am + bn + cp = 0$ where $m, n, p\in\zed$ are coprime. At most one of the ratios
$a/b$, $a/c$ and $b/c$ is rational (otherwise we have 
$\dim\spa\{a,b,c\}=1$), so assume without loss of generality that
$a/c\not\in\rats$ and $b/c\not\in\rats$.

Define the set
\begin{align*} 
\mathbb{T} & \defn
\left\{ x \in [0,2\pi)^3
\,\middle|\,
\forall u\in\zed^3\,.\,
u\cdot (a,b,c) \in 2\pi\zed \Rightarrow
u\cdot x \in 2\pi\zed
\right\} \\
& = \left\{ (x_1,x_2,x_3) \in [0,2\pi)^3
\,\middle|\, 
mx_1 + nx_2 + px_3  \in 2\pi\zed \right\}
\end{align*}
Notice that if $mx_1 + nx_2 + px_3 = 2k\pi$ for some $x_1,x_2,x_3$, then $k\leq
|m| + |n| + |p|$, so $\mathbb{T}$ partitions naturally into finitely many
subsets:  $\mathbb{T}=\bigcup_{k=1}^N \mathbb{T}_k$, where
\[ 
\mathbb{T}_k \defn 
\left\{ (x_1,x_2,x_3) \in [0,2\pi)^3
\,\middle|\, 
mx_1 + nx_2 + px_3 = 2k\pi \right\}.
\]
Consider the trajectory $h(t)\defn\left\{ (at,bt,
ct)\bmod 2\pi\,\middle|\,t\geq 0\right\}$.
Define also the sets $R \defn \{ h(2k\pi)\,|\,k\in\nat \}$ and 
$H\defn \{ h(t)\,|\,t\geq 0 \}$. Because of the linear dependence satisfied 
by $a,b,c$, it is easy to see that $R\subseteq H\subseteq\mathbb{T}$.
By Kronecker's Theorem, $R$ is a dense subset of $\mathbb{T}$, so clearly $H$
must be a dense subset of $\mathbb{T}$ as well.

Now define the function
\begin{align*}
F(x_1,x_2,x_3) \defn \; & A\cos(x_1+\varphi_1) + B\cos(x_2+\varphi_2) \\
& + C\cos(x_3+\varphi_3) + D,
\end{align*}
so that the image of $f$ is exactly $\{F(x_1,x_2,x_3)\,|\,(x_1,x_2,x_3)\in H\}$.
Let also the extrema of $F$ over $\mathbb{T}$ be:
\begin{align*} 
M_1 & \defn \min_{\mathbb{T}} F(x_1,x_2,x_3), \\
M_2 & \defn \max_{\mathbb{T}} F(x_1,x_2,x_3). \\
\end{align*}
Both of these values are algebraic and can be computed using quantifier elimination 
in the first-order language $\lang$ of the real numbers: just use separate variables
for $\cos(x_j),\sin(x_j)$ and apply the standard trigonometric identities to convert
the linear dependence on $x_1,x_2,x_3$ into a polynomial dependence between $\cos(x_j),\sin(x_j)$.

Now, by the density of $H$ in $\mathbb{T}$,
if $M_1 < 0 < M_2$, then $f$ must clearly be infinitely often positive 
and infinitely often negative, so it must have infinitely many zeros. 
The case $M_1 < 0 = M_2$ is symmetric to $0 = M_1 < M_2$ (just replace
$f$ and $F$ by $-f$ and $-F$, respectively), so without loss generality, we can assume
$0 = M_1 < M_2$. In this case, we argue that $f$ has no zeros, that is,
even though $F$ vanishes on some points in $\mathbb{T}$, none of these points
appear in the dense subset $H$. Indeed, consider the set
\begin{align*} 
Z \defn & \left\{ (\cos(x_1),\sin(x_1),\dots,\cos(x_3),\sin(x_3))\,\middle|\, \right.\\
& \left.(x_1,x_2,x_3)\in\mathbb{T}, F(x_1,x_2,x_3)=0\right\}.
\end{align*}
Note that $Z$ is clearly semi-algebraic, as one can directly write a 
defining formula in $\lang$ from $F(x_1,x_2,x_3)=0$ and $mx_1 + nx_2 + px_3 \in 2\pi\zed$.
Moreover, by the Zero-Dimensionality Lemma \cite[Lemma 10]{ouaknine2014positivity},
the function $F(x_1,x_2,x_3)$ achieves its minimum $M_1=0$ at only finitely 
many points in $\mathbb{T}_k$, for each $k$. Since $\mathbb{T}$ 
is the union of finitely many $\mathbb{T}_k$, we immediately have that $Z$ is
finite. By the Tarski-Seidenberg Theorem, projecting $Z$ to any fixed component
will also give a finite, semi-algebraic subset of $\re$, that is, a finite subset
of $\alg$. Thus, we have shown that if $F(x_1,x_2,x_3)=0$, then $e^{ix_j}\in\alg$
for all $j=1,2,3$. Now if $f(t) = 0$ for some $t\geq 0$, then we must have
$e^{ati},e^{cti}\in\alg$, which by the Gelfond-Schneider Theorem entails
$a/c\in\rats$, a contradiction.
\end{proof}

\subsection{One repeated oscillation}\label{sec:repOsc}

\begin{lemma}\label{lem: repOsc}
Let $A, B, C, D, a, r$ be real algebraic numbers such that 
$a, r > 0$ and  $A \neq 0$.
Let also $\varphi_1,\varphi_2\in\re$ be
such that
$e^{i\varphi_1},e^{i\varphi_2}\in\lalg$.  
Define the exponential polynomial $f$ by
\[
f(t) = t(A\cos(at+\varphi_1) + B) + (C\cos(at + \varphi_2) + D) + e^{-rt}F(t)
\]
where $F$ is an exponential polynomial with 
purely imaginary dominant characteristic roots.
Suppose also that $f$ has order at most $8$.
It is decidable whether $f$ has infinitely many
zeros. 
\end{lemma}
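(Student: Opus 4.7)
The plan is a three-tier case analysis driven by progressively finer asymptotics of the dominant term $t(A\cos(at+\varphi_1)+B)$.

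The first tier compares $|A|$ and $|B|$. If $|A|>|B|$, the coefficient $A\cos(at+\varphi_1)+B$ takes both signs on non-trivial intervals periodically, so multiplied by the growing factor $t$ it attains arbitrarily large positive and arbitrarily large negative values, swamping the $\bigoh(1)$ middle term and the $\bigoh(e^{-rt})$ tail; hence $f$ has infinitely many zeros.  If $|A|<|B|$, the coefficient has fixed sign $\mathrm{sign}(B)$ with magnitude bounded below by $|B|-|A|>0$, so $f$ is eventually of fixed sign and has only finitely many zeros.

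The remaining case $|A|=|B|$ invites the second tier.  After replacing $f$ by $-f$ and shifting $\varphi_1$ by $\pi$ if necessary, I may write
\[ f(t) = At(1-\cos(at+\varphi_1)) + (C\cos(at+\varphi_2)+D) + e^{-rt}F(t), \quad A>0, \]
so the dominant term is non-negative and vanishes only at the critical points $t_k\defn(2k\pi-\varphi_1)/a$.  Set $\Phi\defn\varphi_2-\varphi_1$ and $M\defn C\cos\Phi+D$; this is an effectively computable algebraic number, and $f(t_k)=M+\bigoh(e^{-rt_k})$.  If $M>0$ then $f>0$ eventually (finitely many zeros); if $M<0$ then $f(t_k)<0$ while $f>0$ away from critical points (infinitely many zeros).

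The third tier handles $M=0$.  Substituting $D=-C\cos\Phi$ and expanding near $t_k$ in the local variable $u\defn t-t_k$, the principal part of $f$ (modulo the $\bigoh(e^{-rt_k})$ tail) becomes the parabola $\tfrac{(At_k-C\cos\Phi)a^2}{2}u^2 - Ca\sin\Phi\cdot u$, whose minimum value $-C^2\sin^2\Phi/(2(At_k-C\cos\Phi))$ is of order $1/t_k$.  If $C\sin\Phi\neq 0$, this polynomially small negative dip dominates the exponentially small $\bigoh(e^{-rt_k})$ correction, so $f$ goes strictly negative arbitrarily far out along the $t_k$ sequence and thus has infinitely many zeros.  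Otherwise $C\sin\Phi=0$ and the identity $C\cos(at+\varphi_2)+D=\mp C(1-\cos(at+\varphi_1))$ (or simply $C=D=0$) collapses $f$ to the product form
\[ f(t) = h(t)(1-\cos(at+\varphi_1))+e^{-rt}F(t), \]
with $h(t)=At\pm C$ eventually positive.  The order bound forces $F$ to have order at most $2$ with purely imaginary dominant roots, and I finish by cases on $F$: if its dominant root is $0$, the leading coefficient of $F$ determines whether $f$ is eventually positive or $f(t_k)$ is eventually negative; if $F(t)=E\cos(ct+\varphi_3)$, I invoke Theorem~\ref{thm: mtargument} when $c/a\in\rats$, and when $c/a\not\in\rats$ I use Lemma~\ref{lem: density} to equidistribute $F(t_k)$ in $[-|E|,|E|]$, yielding $f(t_k)<0$ at infinitely many critical times.

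The main obstacle is the knife-edge case $M=0$, $C\sin\Phi\neq 0$, where one must compare a polynomially small (order $1/t_k$) parabolic dip against the exponentially small tail; the asymptotic fact that $1/t$ dominates $e^{-rt}$ is what makes the parabolic analysis robust enough to survive the perturbation.
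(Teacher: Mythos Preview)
Your proposal is correct but organised differently from the paper's proof, and the difference is instructive.

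The paper's first move is to invoke Theorem~\ref{thm: mtargument}: since $f$ has order at most $8$, the tail $F$ has order at most $2$, and unless $F(t)=E\cos(bt+\varphi_3)$ with $a/b\notin\rats$ the imaginary parts of the characteristic roots of $f$ span a one-dimensional $\rats$-vector space, so Theorem~\ref{thm: mtargument} disposes of the instance outright. You postpone this reduction to the very last sub-case ($M=0$, $C\sin\Phi=0$), where you essentially reproduce it.

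The payoff of your ordering appears in the $M=0$, $C\sin\Phi\neq 0$ sub-case: you exhibit a parabolic dip of depth $\Theta(1/t_k)$ near each critical point, which beats the $\bigoh(e^{-rt_k})$ tail regardless of the structure of $F$. The paper's argument at $M=0$ is shorter but leans on the reduction already made: it simply observes $f(t_k)=e^{-rt_k}F(t_k)$ and uses Lemma~\ref{lem: density} (applicable because $a/b\notin\rats$) to find infinitely many $t_k$ with $F(t_k)<0$. Your Taylor argument is more robust; the paper's is more direct given its setup.

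One caution: you dispatch the case $M>0$ in a single clause (``$f>0$ eventually''), but this is exactly where the paper expends most of its effort, via an explicit critical-region argument with shrinking radii $\delta_j\sim t_j^{-1/2}$. The claim is true, and in fact your own parabolic expansion would justify it (the local quadratic $M - Ca\sin\Phi\cdot u + \tfrac{Aa^2 t_k}{2} u^2$ has minimum $M - \bigoh(1/t_k)>0$ for large $k$, while away from critical points the growing term $At(1-\cos(at+\varphi_1))$ dominates the bounded remainder), but it deserves more than an aside: it is the one place where showing ultimate positivity is not immediate.
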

\begin{proof}
Since $f$ has order no greater than $8$,
it follows that $F$ has order at most $2$. Therefore,
$F(t)$ must be of the form $E\cos(bt+\varphi_3)$
for some $E,b\in\ra$, $b>0$, such that $a/b\not\in\rats$, and some
$\varphi_3$ such that $e^{i\varphi_3}\in\alg$, since otherwise
the imaginary parts of the characteristic roots of $f$ are pairwise linearly
dependent over $\rats$, so our claim is proven immediately by 
Theorem \ref{thm: mtargument}.

Consider first the magnitudes of $A$ and $B$.
If $|A|>|B|$, then the term $tA\cos(at+\varphi_1)$ makes $f$ 
change sign infinitely often, whereas if $|B|>|A|$, then
for $t$ large enough, the term $tB$ makes $f$ ultimately
positive or ultimately negative, depending on the sign of $B$.
Thus, we can assume $|A|=|B|$. Dividing $f$ by $B$, and
replacing $\varphi_1$ by $\varphi_1+\pi$ if necessary, we can assume
the function has the form:
\begin{align*}
f(t) = \; & t(1-\cos(at+\varphi_1)) \\
& + (C\cos(at+\varphi_2)+D) + e^{-rt}E\cos(bt+\varphi_3).
\end{align*}
Considering the dominant term, it is clear that 
$f$ is infinitely often positive. Let 
$\alpha(t)\defn t(1-\cos(at+\varphi_1))$, 
$\beta(t)\defn C\cos(at+\varphi_2)+D$ and
$\gamma(t)\defn e^{-rt}E\cos(bt+\varphi_3)$.

We now focus on the sign of the term $\beta(t)$ at the positive
\emph{critical times} $t_j\defn -\varphi_1/a + 2j\pi/a$ ($j\in\zed$)
when $1-\cos(at+\varphi_1)$ vanishes.
Notice that $\beta(t_j)=C\cos(\varphi_2-\varphi_1)+D\defn M$ is
independent of $j$.
First, if $M < 0$, then for all $t_j$
large enough, $f(t_j)<0$, so the function must have infinitely many
zeros. Second, if $M=0$, then by the linear independence of $a,b$ and 
Lemma \ref{lem: density}, we have 
$\alpha(t_j) = \beta(t_j) = 0 > \gamma(t_j)$ for infinitely many $t_j$, so we
can conclude $f$ has infinitely many zeros.

Finally, suppose $M > 0$. We will prove that $f$ is ultimately positive.
For each $t_j$, define the \emph{critical region} 
$[t_j-\delta_j, t_j+\delta_j]$, given by 
\[
\delta_j \defn \frac{2\sqrt{|C|+|D|}}{a\sqrt{t_{j-1}}}.
\]
From here onwards, we only consider $t$ large enough for any two
adjacent critical regions to be disjoint. The argument consists of two parts:
first we show $f(t) > 0$ for all large enough $t$ outside all critical
regions, and then we show $f(t) > 0$ for large enough $t$ in a critical region.

Suppose $t$ is outside all critical regions and let
$j$ minimise $|t-t_j|$. Since the distance between critical points is
$2\pi/a$ by construction, we have $a|t-t_j|\leq \pi$. Therefore,
\[
\frac{|a(t-t_j)|^2}{2} \leq 1-\cos(at - at_j) = 1-\cos(at + \varphi_1).
\]
On the other hand, we have the following chain of inequalities:
\begin{align*}
& \frac{|a(t-t_j)|^2}{2} \\
> & \mbox{ $\{$ $|t-t_j|>\delta_j$ $\}$ }  \\
& \frac{(a\delta_j)^2}{2} \\
= & \mbox{ $\{$ definition of $\delta_j$ $\}$ } \\
& \frac{2(|C|+|D|)}{t_{j-1}}  \\
> & \mbox{ $\{$ by $t>t_{j-1}$ $\}$ } \\
& \frac{2(|C|+|D|)}{t} \\
\geq & \mbox{ $\{$ triangle inequality and $|\cos(x)|\leq1$ $\}$ } \\ 
& \frac{|C|+|D|}{t} + \frac{|C\cos(at+\varphi_2)+D|}{t}. \\
\end{align*}
Combining, we have 
\begin{align*}
\alpha(t)+\beta(t) & \geq \alpha(t)-|\beta(t)| \\
& = t(1-\cos(at+\varphi_1)) - |C\cos(at + \varphi_2)+D| \\
& \geq |C|+|D|.
\end{align*}
Thus, if $t$ is large enough to ensure $|\gamma(t)| < |C|+|D|$, we have $f(t)>0$ outside critical regions.

For the second part of the argument, we consider $t$ in 
critical regions. Notice that the values of $\beta(t)$ on 
$[t_j-\delta_j,t_j+\delta_j]$ are independent of the choice 
of $t_j$. Moreover, we have $\beta(t_j) = M > 0$, so there 
exists some $\epsilon > 0$ such that for all $t\in[t_j-\epsilon, 
t_j+\epsilon]$, we have $\beta(t) \geq M/2$, say. Now for any 
critical point $t_j$ chosen large enough, we will have 
$[t_j - \delta_j,t_j + \delta_j]\subseteq [t_j -\epsilon, 
t_j+\epsilon]$, so $\beta(t) > M/2$ on the entire critical 
region. Let also $t_j$ be large enough so that for any $t$ in 
the critical region, we have $|\gamma(t)| < M/2$. Then we have 
$f(t) = \alpha(t) + \beta(t) + \gamma(t) \geq \beta(t) - |\gamma(t)| 
> 0$, completing the claim.
\end{proof}
\section{Hardness at Order 9}\label{sec:hardness}

Diophantine approximation is a branch of number theory concerned with
approximating real numbers by rationals.  A central role is played in
this theory by the notion of \emph{continued fraction expansion},
which allows to compute a sequence of rational approximations to a
given real number that is optimal in a certain well-defined sense.
For our purposes it suffices to note that the behaviour of the
continued fraction expansion of a real number $a$ is closely related
to the following two constants associated with $a$.  The
\emph{Lagrange constant} (or \emph{homogeneous Diophantine
  approximation constant}) of $a$ is defined by
\[
L_{\infty}(a) = \inf\left\{ c : \left| a - \frac{n}{m} \right| < \frac{c}{m^2}
\mbox{ for infinitely many $m,n\in\mathbb{Z}$}\right\}.
\]
By definition $L_\infty(a)$ is a non-negative real number.

A real number $a$ is called \emph{badly approximable} if
$L_\infty(a)>0$.  The badly approximable
numbers are precisely those whose continued fraction expansions have bounded
partial quotients.

Khinchin showed in 1926 that almost all real numbers (in the
measure-theoretic sense) have Lagrange constant equal to
zero.  However, information on the Lagrange constants of
specific numbers or classes of numbers has proven to be elusive.  In
particular, concerning algebraic numbers, Guy~\cite{Guy04} asks
\begin{quote}
  Is there an algebraic number of degree greater than two whose simple
  continued fraction expansion has unbounded partial quotients?  Does
  every such number have unbounded partial quotients?
\end{quote}
The above question can equivalently be formulated in terms of whether
any algebraic number of degree greater than two has stricly positive
Lagrange constant or whether all such numbers have Lagrange constant
$0$.

Recall that a real number $a$ is \emph{computable} if there is an
algorithm which, given any rational $\varepsilon>0$ as input, returns
a rational $q$ such that $|q-x|<\varepsilon$.  We can now state the
main result of the section.

In this section, we will show that a decision procedure
for the Infinite Zeros Problem would yield the computability of
$L_{\infty}(a)$ for all $a\in\ra$.

Fix positive $a\in\ra$, $c\in\rats$ and define the functions:
\begin{align*}
f_1(t) & \defn e^t(1-\cos(t)) + t (1 - \cos(at)) - c\sin(at),  \\
f_2(t) & \defn e^t(1-\cos(t)) + t (1 - \cos(at)) + c\sin(at),  \\
f(t) & \defn e^t(1-\cos(t)) + t (1 - \cos(at)) - c|\sin(at)| \\
& = \min\{f_1(t),f_2(t)\}.
\end{align*}
It is easy to see that $f_1$ and $f_2$ are exponential
polynomials of order 9, with six characteristic roots: three simple ($1$ and
$1\pm i$) and three repeated ($0$ and $\pm ai$). Thus, the
problem of determining whether $f_j$ has infinitely
many zeros is an instance of the Infinite Zeros Problem.
Moreover, it is easy to check that 
$f$ has infinitely many zeros if and only
if at least one of $f_1$ and $f_2$ has infinitely many
zeros.

We will first state two lemmas which show a connection between
the existence of infinitely many zeros of $f$ and the Lagrange 
constant of $a$. We defer the proofs to Appendix \ref{sec:hardnessProofs}. 
\setcounter{venchoRestateLemma1}{\arabic{theorem}}
\begin{lemma}\label{hardForward}
Fix $a\in\ra$ and $\varepsilon,c\in\rats$ with $a,c>0$
and $\varepsilon\in(0, 1)$. If $f(t) = 0$ for infinitely many
$t\geq 0$, then $L_{\infty}(a)\leq c/2\pi^2(1-\varepsilon)$.
\end{lemma}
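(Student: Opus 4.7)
The plan is to exploit that both terms $e^t(1-\cos(t))$ and $t(1-\cos(at))$ in $f$ are non-negative, so a zero $t_n$ of $f$ forces both of these terms to be bounded by $c|\sin(at_n)| \leq c$. This immediately gives the two basic estimates
\[
1-\cos(t_n) \;\leq\; c\,e^{-t_n}, \qquad t_n(1-\cos(at_n)) \;\leq\; c\,|\sin(at_n)|.
\]
The first inequality says $t_n$ must be extremely close to an integer multiple of $2\pi$; I will use $1-\cos(x) \geq 2x^2/\pi^2$ on $[-\pi,\pi]$ to conclude that there is an integer $n=n(t_n)$ with $|t_n - 2\pi n| \leq C_1 e^{-t_n/2}$. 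In particular this error is $o(1/t_n)$.

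Next I would let $m = m(t_n)$ be the nearest integer to $at_n/(2\pi)$ and set $u_n = at_n - 2\pi m$, so $|u_n|\leq \pi$. The second inequality then reads $t_n(1-\cos u_n) \leq c|\sin u_n|$. Using the standard Taylor expansions $1-\cos u = \tfrac{u^2}{2}(1 - O(u^2))$ and $|\sin u| = |u|(1-O(u^2))$ for small $u$, this yields $|u_n| \leq 2c/t_n$ up to a multiplicative factor that tends to $1$. (If infinitely many $u_n$ did not tend to $0$, one would instead get a uniform bound on $t_n$, contradicting $t_n \to \infty$.)

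With these two estimates in hand, the rational approximation drops out by direct computation:
\[
a - \frac{m}{n} \;=\; \frac{2\pi m + u_n}{2\pi n + (t_n-2\pi n)} - \frac{m}{n} \;=\; \frac{n u_n - m(t_n - 2\pi n)}{n\, t_n},
\]
so
\[
\left|a - \frac{m}{n}\right| \;\leq\; \frac{|u_n|}{t_n} + \frac{m}{n\,t_n}\,|t_n - 2\pi n|.
\]
Since $m/n \to a$ and $|t_n - 2\pi n|$ is exponentially small, the second summand is $o(1/n^2)$. For the first, $|u_n|/t_n \leq 2c/t_n^2$, and $t_n^2 = (2\pi n)^2(1+o(1))$, giving
\[
\left|a - \frac{m}{n}\right| \;\leq\; \frac{c}{2\pi^2 n^2}\bigl(1 + o(1)\bigr).
\]
For any fixed $\varepsilon \in (0,1)$, the $o(1)$ term is eventually $\leq \varepsilon/(1-\varepsilon)$, yielding $|a - m/n| < c/(2\pi^2(1-\varepsilon)n^2)$.

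Finally, I would argue that infinitely many distinct reduced fractions $m/n$ arise. Since $t_n \to \infty$, the integers $n(t_n) = \lfloor t_n/(2\pi)+\tfrac12\rfloor$ are unbounded, so infinitely many values of $n$ occur; by the definition of $L_\infty(a)$, this gives $L_\infty(a) \leq c/(2\pi^2(1-\varepsilon))$. The only mildly delicate step is controlling the Taylor-expansion corrections in deriving $|u_n| \leq 2c/t_n$; I would make this quantitative by choosing $N$ so large that $u_n^2 \leq \varepsilon$ whenever $t_n \geq N$, using $|u_n| \leq \pi$ as an a priori bound and the fact that $|u_n| \to 0$.
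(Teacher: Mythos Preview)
Your argument is correct and follows essentially the same route as the paper: write $t=2\pi n+\delta_1$ and $at=2\pi m+\delta_2$, use $e^{t}(1-\cos t)\le c$ and $t(1-\cos at)\le c|\sin at|$ to bound $|\delta_1|$ (exponentially small) and $|\delta_2|$ (of order $2c/t$) separately, and combine via the triangle inequality to obtain $|a-m/n|\le c/(2\pi^2 n^2)\cdot(1+o(1))$. The only difference is presentational: the paper introduces auxiliary parameters $\varphi_1,\varphi_2\in(0,\varepsilon)$ and explicit thresholds so that every constant is effective, whereas you absorb all of this into $o(1)$ terms---which is fine here, since effectiveness of the threshold is not used downstream.
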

\setcounter{venchoRestateLemma2}{\arabic{theorem}}
\begin{lemma}\label{hardBackward}
Fix $a\in \ra$ and $\varepsilon,c\in\rats$ with $a,c>0$ and
$\varepsilon\in(0, 1)$. 
If $L_{\infty}(a)\leq c(1-\varepsilon)/2\pi^2$, then $f(t)=0$
for infinitely many $t\geq 0$.
\end{lemma}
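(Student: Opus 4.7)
The plan is to produce, via the Intermediate Value Theorem applied to the continuous function $f = \min(f_1,f_2)$, an infinite sequence of disjoint intervals each containing a zero of $f$. Concretely, I will show $f(2\pi m) < 0$ for infinitely many positive integers $m$ and $f(2\pi m + \pi) > 0$ for every sufficiently large $m$.

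If $a \in \rats$, say $a = p/q$, then $f(2\pi k q) = 0$ for every $k \in \nat$ and we are done; assume henceforth $a \notin \rats$. Since $c(1-\varepsilon/2)/(2\pi^2) > L_{\infty}(a)$, the definition of the Lagrange constant supplies infinitely many pairs $(m,n)$, with $m$ a positive integer and $n$ an integer, satisfying $|a - n/m| < c(1-\varepsilon/2)/(2\pi^2 m^2)$. Writing $a_m := ma - n$, we have $a_m \neq 0$, $|a_m| = \bigoh(1/m) \to 0$, and $2\pi^2 m |a_m| < c(1-\varepsilon/2)$.

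Evaluating $f$ at $t_m := 2\pi m$ annihilates the $e^t(1-\cos t)$ term and leaves
\[ f(t_m) = 2\pi m(1-\cos(2\pi a_m)) - c|\sin(2\pi a_m)|. \]
The standard Taylor bounds $1-\cos x \leq x^2/2$ and $|\sin x| \geq |x| - |x|^3/6$ (valid once $|x|$ is small) together with the approximation inequality above give
\[ f(t_m) \leq 2\pi c|a_m|\left(-\tfrac{\varepsilon}{2} + \tfrac{2\pi^2}{3} a_m^2\right), \]
which is strictly negative for all large $m$ in the subsequence, since $|a_m| \to 0$. At the other endpoint $s_m := 2\pi m + \pi$ one has $1-\cos s_m = 2$, the middle summand $s_m(1-\cos(as_m))$ is non-negative, and the last summand is bounded in modulus by $c$, so $f(s_m) \geq 2e^{s_m} - c > 0$ for $m$ sufficiently large. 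The intervals $(t_m, s_m)$ are pairwise disjoint, one per good approximation, $f$ is continuous as the pointwise minimum of two continuous functions, and the Intermediate Value Theorem then delivers a zero of $f$ in each, yielding infinitely many zeros.

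The only slightly delicate step is the Taylor estimate at $t_m$: one must check that the cubic remainder does not overwhelm the linear-in-$|a_m|$ negative contribution. This is automatic because $|a_m| = \bigoh(1/m)$ forces the parenthesized quadratic correction $(2\pi^2/3) a_m^2$ to tend to $0$, while the $-\varepsilon/2$ term remains bounded away from zero; negativity therefore persists uniformly along the infinite approximation subsequence.
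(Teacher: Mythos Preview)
Your proof is correct and follows essentially the same approach as the paper: evaluate $f$ at $t=2\pi m$ for good rational approximations $n/m$ of $a$, use Taylor estimates to show $f(2\pi m)\leq 0$, and combine with positivity of $f$ at nearby points via the Intermediate Value Theorem. Your version is in fact slightly more careful in two places: you use the intermediate constant $c(1-\varepsilon/2)$ to guarantee infinitely many approximants from the hypothesis $L_\infty(a)\le c(1-\varepsilon)/2\pi^2$ (the paper tacitly uses $c(1-\varepsilon)$ itself), and you exhibit an explicit point $s_m=2\pi m+\pi$ with $f(s_m)>0$ rather than appealing to ``$f$ is positive for arbitrarily large $t$''.
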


We now use the above lemmas to derive an algorithm to compute $L_{\infty}(a)$
using an oracle for the Infinite Zeros Problem, establishing our central
hardness result:
\begin{theorem}\label{thm:hardnessMain}
Fix a positive real algebraic number $a$. If the Infinite Zeros
Problem is decidable for instances of order $9$,
then $L_{\infty}(a)$ may be computed to within arbitrary precision.
\end{theorem}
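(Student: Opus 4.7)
The plan is to reduce computation of $L_\infty(a)$ to a binary search that invokes the order-$9$ Infinite Zeros oracle at successive rational thresholds. For each rational $c > 0$ and rational $\varepsilon' \in (0,1)$, the functions $f_1, f_2$ of the excerpt are order-$9$ exponential polynomials effectively computable from $a$, $c$, $\varepsilon'$, and two oracle calls determine whether $f = \min\{f_1, f_2\}$ has infinitely many zeros. Lemma \ref{hardForward} converts a positive answer into the upper bound $L_\infty(a) \leq c/(2\pi^2(1-\varepsilon'))$, while the contrapositive of Lemma \ref{hardBackward} converts a negative answer into the lower bound $L_\infty(a) > c(1-\varepsilon')/(2\pi^2)$. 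The window of uncertainty between these two bounds has width $c\varepsilon'(2-\varepsilon')/(2\pi^2(1-\varepsilon'))$, which tends to $0$ as $\varepsilon'$ decreases, so the oracle acts as an arbitrarily sharp threshold test on $L_\infty(a)$ at the level $c/(2\pi^2)$.

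First I would check whether $a$ is rational (easily decidable from its minimal polynomial); if so $L_\infty(a) = \infty$ and the question is vacuous, so assume $a$ is irrational, in which case Hurwitz's Theorem bounds $L_\infty(a) \in [0, 1/\sqrt{5}]$. Given a target precision $\varepsilon > 0$, maintain a rational interval $[l, u]$ known to contain $L_\infty(a)$, starting with $[l, u] := [0, 1]$. At each iteration, pick a rational $c$ such that $c/(2\pi^2)$ lies close to the midpoint $(l+u)/2$, pick a rational $\varepsilon' > 0$ small enough that the uncertainty window above has width at most $(u-l)/8$, and query the oracle on $f_1$ and $f_2$. If at least one query reports infinitely many zeros, replace $u$ by a rational over-approximation of $c/(2\pi^2(1-\varepsilon'))$; otherwise replace $l$ by a rational under-approximation of $c(1-\varepsilon')/(2\pi^2)$. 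Iterate until $u - l < \varepsilon$, then output $(l+u)/2$.

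With these choices, each iteration shrinks $u - l$ by a factor of at most $5/8$, so the search terminates after $O(\log((u_0 - l_0)/\varepsilon))$ oracle calls. The only piece of routine bookkeeping is that the bounds produced by the lemmas involve the transcendental constant $\pi^2$; since we need only rational over- and under-approximations in order to update $l$ and $u$, it suffices to approximate $\pi^2$ to sufficiently many bits at each step, which is standard. The only genuine subtlety, rather than a real obstacle, is choosing the slack $\varepsilon'$ correctly at every stage: it must be small enough that the lemma gap is dominated by the binary-search progress yet large enough to remain rational and inputtable to the oracle, and the prescription $\varepsilon' \leq (u-l)/(8 \cdot (c/(2\pi^2) + 1))$ suffices.
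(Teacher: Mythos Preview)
Your proposal is correct and follows essentially the same interval-refinement approach as the paper: query the order-$9$ oracle on $f_1,f_2$ at a chosen rational $c$, then use Lemmas~\ref{hardForward} and~\ref{hardBackward} (respectively their contrapositives) to shrink an interval containing $L_\infty(a)$ by a fixed factor. You add some polish the paper omits (the rational-$a$ case, Hurwitz's initial bound, rational approximation of $\pi^2$, an explicit shrinkage rate), and there is one harmless slip---$f_1,f_2$ depend only on $a$ and $c$, not on $\varepsilon'$---but the argument is the same.
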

\begin{proof}
Suppose we know $L_{\infty}(a)\in[p, q]$ for non-negative
$p,q\in\rats$. Choose $c\in\rats$ with $c>0$ and $\varepsilon\in\rats$
with $\varepsilon\in(0,1)$ such that
\[
p < \frac{c(1-\varepsilon)}{2\pi^2} < \frac{c}{2\pi^2(1-\varepsilon)}
< q.
\]
Write $A \defn c(1-\varepsilon)/2\pi^2$ and $B \defn c/2\pi^2(1-\varepsilon)$.
Use the oracle for the Infinite Zeros Problem to determine whether at least
one of $f_1, f_2$ has infinitely many zeros.
If this is the
case, then $f$ also has 
infinitely many zeros, 
so by Lemma
\ref{hardForward}, $L_{\infty}(a)\leq B$ and we continue the
approximation recursively on the interval $[p, B]$. If not,
then $L(a)\geq A$ by Lemma \ref{hardBackward}, so we continue on
the interval $[A, q]$. Notice that in this procedure, one can
choose $c, \varepsilon$ at each stage in such a way that the confidence
interval shrinks by at least a fixed factor,
whatever the outcome of the oracle invocations. It follows
therefore that $L_{\infty}(a)$ can be approximated to within arbitrary
precision.
\end{proof}

\subparagraph*{Acknowledgements} Ventsislav Chonev is supported by
Austrian Science Fund (FWF) NFN Grant No S11407-N23 (RiSE/SHiNE), ERC
Start grant (279307: Graph Games), and ERC Advanced Grant (267989:
QUAREM).  Jo\"{e}l Ouaknine is supported by ERC grant AVS-ISS
(648701).  James Worrell is supported by EPSRC grant
EP/N008197/1.


\begin{thebibliography}{19}
\providecommand{\natexlab}[1]{#1}
\providecommand{\url}[1]{\texttt{#1}}
\expandafter\ifx\csname urlstyle\endcsname\relax
  \providecommand{\doi}[1]{doi: #1}\else
  \providecommand{\doi}{doi: \begingroup \urlstyle{rm}\Url}\fi

\bibitem[Baker(1975)]{Baker75}
A.~Baker.
\newblock \emph{Transcendental number theory.}
\newblock Cambridge University Press, Cambridge, 1975.

\bibitem[Bell et~al.(2010)Bell, Delvenne, Jungers, and Blondel]{BDJB10}
P.~C. Bell, J.-C. Delvenne, R.~M. Jungers, and V.~D. Blondel.
\newblock The {C}ontinuous {S}kolem-{P}isot {P}roblem.
\newblock \emph{Theoretical Computer Science}, 411\penalty0 (40-42):\penalty0
  3625--3634, 2010.
\newblock ISSN 0304-3975.

\bibitem[Berstel and Mignotte(1976)]{BerstelMignotte76}
J.~Berstel and M.~Mignotte.
\newblock Deux propri{\'e}t{\'e}s d{\'e}cidables des suites r{\'e}currentes
  lin{\'e}aires.
\newblock \emph{Bulletin de la Soci{\'e}t{\'e} Math{\'e}matique de France},
  104:\penalty0 175--184, 1976.

\bibitem[Cohen(1993)]{Coh93}
H.~Cohen.
\newblock \emph{A Course in Computational Algebraic Number Theory}.
\newblock Springer, 1993.

\bibitem[Gelfond(1934)]{gelfonda}
A.~O. Gelfond.
\newblock On {H}ilbert's seventh problem.
\newblock In \emph{Dokl. Akad. Nauk. SSSR}, volume~2, pages 1--6, 1934.

\bibitem[Gelfond and Vinogradov(1934)]{gelfondb}
A.~O. Gelfond and I.~Vinogradov.
\newblock Sur le septieme probleme de {H}ilbert.
\newblock \emph{Bull. Acad. Sci. URSS}, pages 623--634, 1934.

\bibitem[Guy(2004)]{Guy04}
R.~Guy.
\newblock \emph{Unsolved Problems in Number Theory}.
\newblock Springer, third edition, 2004.

\bibitem[Gy{\H{o}}ri and Ladas(1991)]{gyori}
I.~Gy{\H{o}}ri and G.~Ladas.
\newblock \emph{Oscillation Theory of Delay Differential Equations: with
  Applications}.
\newblock Oxford mathematical monographs. Oxford University Press, 1991.
\newblock ISBN 9780198535829.
\newblock URL \url{https://books.google.co.uk/books?id=6CnvAAAAMAAJ}.

\bibitem[Hardy and Wright(1999)]{hardy}
G.~Hardy and E.~Wright.
\newblock An introduction to the theory of numbers.
\newblock \emph{Oxford}, 1:\penalty0 979, 1999.

\bibitem[Macintyre and Wilkie(1996)]{macintyreWilkie}
A.~Macintyre and A.~J. Wilkie.
\newblock On the decidability of the real exponential field.
\newblock 1996.

\bibitem[Marker(2002)]{marker}
D.~Marker.
\newblock \emph{Model Theory: An Introduction}.
\newblock Graduate Texts in Mathematics. Springer, 2002.

\bibitem[Ouaknine and Worrell(2014)]{ouaknine2014positivity}
J.~Ouaknine and J.~Worrell.
\newblock On the positivity problem for simple linear recurrence sequences.
\newblock In \emph{Automata, Languages, and Programming}, pages 318--329.
  Springer, 2014.

\bibitem[Pan(1996)]{panApproximatingRoots}
V.~Pan.
\newblock Optimal and nearly optimal algorithms for approximating polynomial
  zeros.
\newblock \emph{Computers {\&} Mathematics with Applications}, 31\penalty0
  (12):\penalty0 97 -- 138, 1996.

\bibitem[Renegar(1992)]{Renegar}
J.~Renegar.
\newblock On the computational complexity and geometry of the first-order
  theory of the reals. part i: Introduction. preliminaries. the geometry of
  semi-algebraic sets. the decision problem for the existential theory of the
  reals.
\newblock \emph{Journal of Symbolic Computation}, 13\penalty0 (3):\penalty0 255
  -- 299, 1992.
\newblock ISSN 0747-7171.
\newblock \doi{http://dx.doi.org/10.1016/S0747-7171(10)80003-3}.

\bibitem[Schneider(1935{\natexlab{a}})]{schneidera}
T.~Schneider.
\newblock Transzendenzuntersuchungen periodischer {F}unktionen {I}.
  {T}ranszendenz von {P}otenzen.
\newblock \emph{Journal f{\"u}r die reine und angewandte Mathematik},
  172:\penalty0 65--69, 1935{\natexlab{a}}.

\bibitem[Schneider(1935{\natexlab{b}})]{schneiderb}
T.~Schneider.
\newblock Transzendenzuntersuchungen periodischer {F}unktionen {II}.
  {T}ranszendenzeigenschaften elliptischer {F}unktionen.
\newblock \emph{Journal f{\"u}r die reine und angewandte Mathematik},
  172:\penalty0 70--74, 1935{\natexlab{b}}.

\bibitem[Tao(2008)]{Tao08}
T.~Tao.
\newblock \emph{Structure and randomness: pages from year one of a mathematical
  blog}.
\newblock American Mathematical Society, 2008.

\bibitem[Tarski(1951)]{Tarski51}
A.~Tarski.
\newblock A decision method for elementary algebra and geometry.
\newblock 1951.

\bibitem[Wilkie(1996)]{wilkieMC}
A.~J. Wilkie.
\newblock Model completeness results for expansions of the ordered field of
  real numbers by restricted {P}faffian functions and the exponential function.
\newblock \emph{Journal of the American Mathematical Society}, pages
  1051--1094, 1996.

\end{thebibliography}

\appendix

\section{Proofs of Hardness Lemmas}\label{sec:hardnessProofs}

Throughout this section, let
\[ 
f(t) \defn e^t(1-\cos(t)) + t (1 - \cos(at)) - c|\sin(at)|.
\]

\setcounter{venchoRestateLemma}{\arabic{theorem}}
\setcounter{theorem}{\arabic{venchoRestateLemma1}}

\begin{lemma}
Fix $a\in\ra$ and $\varepsilon,c\in\rats$ with $a,c>0$
and $\varepsilon\in(0, 1)$. If $f(t) = 0$ for infinitely many
$t\geq 0$, then $L_{\infty}(a)\leq c/2\pi^2(1-\varepsilon)$.
\end{lemma}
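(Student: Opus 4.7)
The plan is to show that any sequence of zeros $t_k \to \infty$ of $f$ yields infinitely many rational approximations $n_k/m_k$ to $a$ of quality better than $c/(2\pi^2(1-\varepsilon)m_k^2)$. At each zero the identity $e^{t_k}(1-\cos t_k) + t_k(1-\cos(at_k)) = c|\sin(at_k)|$ forces both non-negative summands on the left to be bounded by $c$. First, I would use $e^{t_k}(1-\cos t_k) \leq c$ to show $t_k$ is exponentially close to a multiple of $2\pi$: writing $t_k = 2\pi m_k + \delta_k$ with $|\delta_k| \leq \pi$, the inequality gives $|\delta_k| = \bigoh(e^{-t_k/2})$.

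Next, I would let $n_k$ be the integer closest to $m_k a$, set $\xi_k = m_k a - n_k$, and define $\theta_k = 2\pi\xi_k + a\delta_k$, so that $\cos(at_k) = \cos\theta_k$ and $\sin(at_k) = \sin\theta_k$. The bound $t_k(1-\cos(at_k)) \leq c$ forces $\theta_k \bmod 2\pi \to 0$; combined with $|2\pi\xi_k| \leq \pi$ and the exponential smallness of $a|\delta_k|$, I may assume $\theta_k$ is the principal representative in $[-\pi,\pi]$ and $\theta_k \to 0$. Returning to the full equation and applying the Taylor estimates $1-\cos\theta_k \geq \theta_k^2/2 - \theta_k^4/24$ and $|\sin\theta_k| \leq |\theta_k|$, I would derive
\[
\frac{t_k\theta_k^2}{2}\left(1 - \frac{\theta_k^2}{12}\right) \leq c|\theta_k|,
\]
which for large $k$ yields $|\theta_k| \leq 2c/(t_k(1-\varepsilon'))$ for any prescribed $\varepsilon' > 0$.

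Finally, since $|a\delta_k|$ is exponentially smaller than $|\theta_k|$, this estimate transfers to $|\xi_k|$, and combining with $t_k = 2\pi m_k(1 + o(1))$ I would deduce
\[
\left|a - \frac{n_k}{m_k}\right| = \frac{|\xi_k|}{m_k} \leq \frac{c}{2\pi^2 m_k^2(1-\varepsilon)}
\]
for all sufficiently large $k$. Since $t_k \to \infty$ forces $m_k \to \infty$, this produces infinitely many distinct pairs $(n_k,m_k)$ witnessing $L_\infty(a) \leq c/(2\pi^2(1-\varepsilon))$. The main obstacle is the careful bookkeeping of a cascade of error terms of three different orders of magnitude ($\delta_k$ exponentially small, $\theta_k$ polynomially small of order $1/t_k$, and $t_k/(2\pi m_k) \to 1$) to ensure that the final $(1-\varepsilon)$ factor in the Lagrange bound genuinely absorbs all accumulated losses as $k \to \infty$.
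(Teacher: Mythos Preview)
Your proposal is correct and follows essentially the same approach as the paper: write $t=2\pi m+\delta_1$ and $at=2\pi n+\delta_2$ (your $\delta_k$ and $\theta_k$), use $e^t(1-\cos t)\leq c$ to show $\delta_1$ is exponentially small, use $t(1-\cos\delta_2)\leq c|\sin\delta_2|\leq c|\delta_2|$ together with a lower Taylor bound on $1-\cos$ to get $|\delta_2|\leq 2c/(t(1-\varepsilon'))$, and then combine via $|a-n/m|=|\delta_2-a\delta_1|/(2\pi m)$ while absorbing the $t/(2\pi m)\to 1$ and $a\delta_1$ corrections into the gap between $\varepsilon'$ and $\varepsilon$. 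The paper manages the same cascade with auxiliary constants $\varphi_1,\varphi_2,\alpha$ instead of a single $\varepsilon'$, but the argument is the same.
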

\setcounter{theorem}{\arabic{venchoRestateLemma}}
\begin{proof}
Suppose $f(t) = 0$ for infinitely many $t$.
Clearly, this also entails $f(t)=0$ for infinitely
many $t\geq T$, for any particular threshold $T\geq 0$.
(Indeed, $f(t)=\min\{f_1(t),f_2(t)\}$ for exponential
polynomials $f_1$ and $f_2$ given at the beginning of Section
\ref{sec:hardness}. Thus, on any bounded
interval, $f$ has no more zeros than $f_1$ and
$f_2$ combined, i.e., only finitely many, by the analiticity 
of $f_1$ and $f_2$.)
We will show that $T$ can be chosen in such a way that
every zero of $f(t)$ on $[T,\infty)$ yields a pair
$(n,m)\in\nat^2$ which satisfies the inequality
\[ 
\left| a - \frac{n}{m} \right| < \frac{c}{2\pi^2m^2(1-\varepsilon)}.
\]
This is sufficient, since infinitely many zeros of $f$ yield
infinitely many solutions, and therefore witness 
$L_{\infty}(a)\leq c/2\pi^2(1-\varepsilon)$.

Thus, consider some $t$ such that $f(t)=0$ and $t\geq T$ for some 
threshold $T$ to be specified later. 
Let $t = 2\pi m + \delta_1$ 
and $at = 2\pi n + \delta_2$, where $m, n \in \nat$ and 
$\delta_1,\delta_2\in [-\pi,\pi)$.  Then we have
\[
\left| a - \frac{n}{m} \right| = 
\frac{|\delta_2-a\delta_1|}{2\pi m}.
\]
We will show that for $T$ large enough, $f(t)=0$ for
$t\geq T$ allows us to 
bound $|\delta_2|$ and $|a\delta_1|$ separately from above
and then apply the triangle inequality to bound $|\delta_2 -a\delta_1|$.

First, choose $\varphi_1,\varphi_2\in(0, 1)$ such that 
$1-\varphi_2 > 1-\varphi_1 > 1-\varepsilon$. Let $T$ be large enough
for the following property to hold:
\[
\frac{t+\pi}{t-2\pi} \leq \frac{1-\varphi_2}{1-\varphi_1}
\mbox{ for all $t\geq T$.}
\]
In particular, since $m= (t-\delta_1)/2\pi$ and $|\delta_1|\leq \pi$, we
have
\begin{equation}\label{hardprop1}
\frac{2m}{2m-1} \leq \frac{t+\pi}{t-2\pi} \leq \frac{1-\varphi_2}{1-\varphi_1}.
\end{equation}
Let also $T$ be large enough to make the following property valid:
\begin{equation}\label{hardprop2}
1-\cos(x)\leq \frac{c|x|}{T} \wedge |x|\leq\pi \Rightarrow
\frac{(1-\varphi_2)x^2}{2} \leq 1-\cos(x).
\end{equation}
Now we have the following chain of inequalities:
\begin{align*}
& 1-\cos(\delta_2) &  \\
\leq &   \mbox{ \{ $f(t) = 0$, noting $e^t(1-\cos(t))\geq 0$ \} } & \\
& \frac{c|\sin(\delta_2)|}{t} & \\
\leq & \mbox{ \{ by $|\sin(x)|\leq|x|$ \} } & \\
& \frac{c|\delta_2|}{t}. & \\
\end{align*}
Then by (\ref{hardprop2}), we have
\[
1-\cos(\delta_2) \geq \frac{(1-\varphi_2)\delta_2^2}{2}.
\]
Thus, combining the upper and lower bounds on $1-\cos(\delta_2)$
and using (\ref{hardprop1}) on the last step, we have
\[
|\delta_2| \leq \frac{2c}{t(1-\varphi_2)} \leq 
\frac{2c}{(2m-1)\pi(1-\varphi_2)} \leq
\frac{c}{m\pi(1-\varphi_1)}.
\]

Second, let $\alpha \defn (1-\varepsilon)^{-1} - (1-\varphi_1)^{-1} > 0$. Let
the threshold $T$ be large enough so that
\begin{equation}\label{hardprop3}
e^{-t} \leq \frac{c\alpha^2}{4\pi^2a^2}\left(\frac{2\pi}{t+\pi}\right)^2 \mbox{ 
for $t\geq T$}
\end{equation}
and
\begin{equation}\label{hardprop4}
\mbox{if $1-\cos(x)\leq c/e^T$ and $|x|\leq\pi$, then
$x^2/4 \leq 1-\cos(x)$.}
\end{equation}
The following chain of inequalities holds:
\begin{align*}
& 1-\cos(\delta_1) &  \\
= &   \mbox{ \{ by $f(t) = 0$ \} } & \\
& \frac{c|\sin(\delta_2)| - t(1-\cos(\delta_2))}{e^t} & \\
\leq &   \mbox{ \{ by $|\sin(\delta_2)|, |\cos(\delta_2)| \leq 1$\}} & \\ 
& \frac{c}{e^t} & \\
\leq &  \mbox{ \{ by (\ref{hardprop3}) \}} & \\
& 
\frac{c^2\alpha^2}{4\pi^2a^2}\left(\frac{2\pi}{t+\pi}\right)^2 \\
\leq & \mbox{ \{ by $|\delta_1|\leq \pi$ \}} & \\
& \frac{c^2\alpha^2}{4\pi^2a^2}\left(\frac{2\pi}{t-\delta_1}\right)^2 \\
= & \mbox{ \{ $t=2\pi m + \delta_1$ \}} & \\
& \frac{c^2\alpha^2}{4\pi^2a^2m^2}.
\end{align*}
Moreover, as $1-\cos(\delta_1) \leq ce^{-t} \leq ce^{-T}$,
by (\ref{hardprop4}), we have
\[ 1-\cos(\delta_1) \geq \frac{\delta_1^2}{4}, \]
so combining the lower and upper bound on $1-\cos(\delta_1)$,
we can conclude
\[ |a\delta_1| \leq \frac{c\alpha}{\pi m}. \]

Finally, by the triangle inequality and the bounds on $|a\delta_1|$
and $|\delta_2|$,
we have
\begin{align*}
\left| a - \frac{n}{m} \right| & = \frac{|\delta_2 - a\delta_1|}{2\pi m}
\leq \frac{|\delta_2| + |a\delta_1|}{2\pi m} \\
& \leq \frac{c}{2 \pi^2 m^2}\left(\alpha + \frac{1}{1-\varphi_1}\right) =
\frac{c}{2\pi^2m^2(1-\varepsilon)}.
\end{align*}
Now, by the premise of the Lemma, there are infinitely many $t\geq T$
such that $f(t)=0$, each yielding a pair $(n,m)\in\nat^2$ which satisfies 
the above inequality. These infinitely many pairs $(n,m)$ witness
$L_{\infty}(a)\leq c/2\pi^2(1-\varepsilon)$, as required.
\end{proof}

\setcounter{venchoRestateLemma}{\arabic{theorem}}
\setcounter{theorem}{\arabic{venchoRestateLemma2}}
\begin{lemma}
Fix $a\in \ra$ and $\varepsilon,c\in\rats$ with $a,c>0$ and
$\varepsilon\in(0, 1)$. 
If $L_{\infty}(a)\leq c(1-\varepsilon)/2\pi^2$, then $f(t)=0$
for infinitely many $t$.
\end{lemma}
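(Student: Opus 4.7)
The plan is to use the intermediate value theorem. By hypothesis $L_\infty(a)\leq c(1-\varepsilon)/2\pi^2$, so by definition of the Lagrange constant, for any $c''$ strictly larger than this bound there are infinitely many pairs $(n,m)\in\mathbb{Z}^2$ with $m>0$ and $|a-n/m| < c''/m^2$. I pick $c'' = c(1-\varepsilon/2)/2\pi^2$, which satisfies $L_\infty(a) < c'' < c/2\pi^2$. For each such pair with $m$ sufficiently large I will produce a zero of $f$ in a small neighbourhood of $t_0 = 2\pi m$; since the intervals are disjoint and $t_0 \to \infty$, this yields infinitely many zeros.

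For the lower pincer, I evaluate at $t_0 = 2\pi m$. The term $e^{t_0}(1-\cos(t_0))$ vanishes, leaving $f(t_0) = 2\pi m(1-\cos(\delta)) - c|\sin(\delta)|$ where $\delta \defn at_0 - 2\pi n = 2\pi(am-n)$. The approximation gives $|\delta|\leq 2\pi c''/m$, so for large $m$ we have $|\delta|<\pi$, and Taylor expansion yields $1-\cos(\delta)\leq \delta^2/2$ and $|\sin(\delta)| \geq |\delta|(1-\eta)$ for any prescribed $\eta>0$ once $m$ is large. Substituting,
\[
f(t_0) \;\leq\; |\delta|\bigl(\pi m|\delta| - c(1-\eta)\bigr)\;\leq\;|\delta|\bigl(c(1-\varepsilon/2) - c(1-\eta)\bigr) \;<\; 0,
\]
provided $\eta < \varepsilon/2$, i.e.\ for all sufficiently large $m$.

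For the upper pincer, I set $t_1 = t_0 + \delta_1$ with $\delta_1 \defn 2\sqrt{c}\,e^{-\pi m}$. Then $e^{t_1}(1-\cos(\delta_1)) \geq e^{2\pi m}\cdot\tfrac{\delta_1^2}{2}\cdot(1-\eta) = 2c(1-\eta)$, while the non-dominant terms obey $t_1(1-\cos(at_1)) \geq 0$ and $-c|\sin(at_1)| \geq -c$. Hence
\[
f(t_1) \;\geq\; 2c(1-\eta) - c \;>\; 0
\]
for $\eta < 1/2$. Since $f$ is continuous, the intermediate value theorem produces a zero of $f$ in $(t_0, t_1)$, and since distinct valid $m$ give disjoint intervals tending to infinity, we obtain infinitely many zeros.

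The only delicate point is coordinating the two Taylor estimates: the quadratic approximation of $1-\cos$ at $\delta$ must be strong enough to give $f(t_0) < 0$ after comparing $\pi m|\delta|$ with $c$, and this is exactly where the slack between $c(1-\varepsilon)/2\pi^2$ and $c/2\pi^2$ is spent by picking an intermediate $c''$. Once $c''$ is fixed, all error terms absorb uniformly for $m$ large, and the argument goes through mechanically.
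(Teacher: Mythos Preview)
Your proof is correct and follows essentially the same approach as the paper. Both arguments evaluate $f$ at $t_0=2\pi m$ so that the term $e^{t}(1-\cos t)$ vanishes, then use the Diophantine hypothesis together with the Taylor estimates $1-\cos\delta\le\delta^2/2$ and $|\sin\delta|\ge(1-\eta)|\delta|$ to obtain $f(t_0)\le 0$; the only cosmetic differences are that you introduce an intermediate constant $c''$ to force a strict inequality, and you explicitly construct a nearby point $t_1=t_0+2\sqrt{c}\,e^{-\pi m}$ with $f(t_1)>0$, whereas the paper simply observes that $f$ is positive for arbitrarily large $t$ and invokes the intermediate value theorem globally.
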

\setcounter{theorem}{\arabic{venchoRestateLemma}}
\begin{proof}
We will show that there exists an effective threshold $M$,
dependent on $a, c, \varepsilon$, such that if
\begin{equation}\label{eq:lagr} 
\left|a - \frac{n}{m}\right| \leq \frac{c(1-\varepsilon)}{2\pi^2m^2} 
\end{equation}
for natural numbers $n, m$ with $m \geq M$, then $f(2\pi m) \leq 0$. 
Note that this is sufficient to prove the Lemma: the premise guarantees infinitely
many solutions $(n,m)\in\nats^2$ of (\ref{eq:lagr}), so there must
be infinitely many solutions with $m\geq M$, each yielding
$f(2\pi m)\leq 0$. Since $f$ is continuous and moreover is positive for arbitrarily
large times, it must have infinitely many zeros on $[2\pi M,\infty)$.

Now let $M$ be large enough, so that $c(1-\varepsilon)/\pi M < \pi$ and
\begin{equation}\label{hardprop5}
\mbox{if $|x|< c(1-\varepsilon)/\pi M$, then 
$(1-\varepsilon)|x|\leq|\sin(x)|$.}
\end{equation}
Suppose that (\ref{eq:lagr}) holds for $n, m \in\nats$ with $m\geq M$ and 
write $t \defn 2\pi m$. We will show that $f(t) \leq 0$. 
By (\ref{eq:lagr}), we have $|am-n| \leq c(1-\varepsilon)/2\pi^2m$. Therefore,
$at = 2\pi a m = 2\pi n + \delta$ where 
$|\delta| \leq c(1-\varepsilon)/\pi m < \pi$. We have
\begin{align*}
& f(t) & \\
= & \mbox{ \{ as $\cos(t)=1$ \} } & \\
& t(1-\cos(\delta)) - c|\sin(\delta)| & \\
\leq & \mbox{ \{ by (\ref{hardprop5}) and $1-\cos(x)\leq x^2/2$ \} } & \\
& \pi m\delta^2 - c(1-\varepsilon)|\delta| & \\
\leq & \mbox{ \{ by $|\delta|\leq c(1-\varepsilon)/\pi m$ \} } & \\
& 0. &
\end{align*}
\end{proof}



\end{document}